\newcommand\dist{\mathop{\rm dist}\nolimits}
\newcommand\wt{\mathop{\rm wt}\nolimits}
\newcommand\Ent{\mathop{\rm H}\nolimits}
\title{Linear tree codes and the problem of explicit constructions}
\author{Pavel Pudl\'ak
\thanks{The author is supported by
the grants P202/12/G061 of GA \v{C}R and
RVO: 67985840.}}
\begin{document}
\maketitle

\begin{abstract}
We reduce the problem of constructing asymptotically good tree codes to
the construction of triangular totally nonsingular matrices over fields
with polynomially many elements. We show a connection of this problem
to Birkhoff interpolation in finite fields.
\end{abstract}

2010 Mathematics Subject Classification: 94B60, 15B99

\section{Introduction}

Tree codes, in the sense we are going to use in this paper, were
introduced by L.J. Schulman in 1993.  He showed that asymptotically
good tree codes can be used in efficient interactive communication
protocols and proved by a probabilistic argument that such tree codes
exist \cite{schulman93,schulman}. He posed as an open problem to give
an explicit effectively computable construction of them. Efficiently
constructible tree codes would be very useful in designing robust
interactive protocols. The field has attracted a lot of attention in
recent years, however this central problem still remains open. A
possible solution may be the construction of Moore and
Schulman~\cite{moore-schulman} found recently. Their construction
provides asymptotically good tree codes if a certain
number-theoretical conjecture, introduced in their paper, is true. The
conjecture is inspired by some well-known results about exponential
sums and is supported by numerical evidence.

In this paper we propose a different approach to this problem. We
study generator and parity check matrices of linear codes and reduce
the problem to constructing triangular totally nonsingular matrices
over fields of polynomial size. A lower triangular matrix $M$ is
called triangular totally nonsingular if every square submatrix of $M$
whose diagonal is entirely in the lower triangle is
nonsingular. Explicit examples of such matrices are known over the
field of real numbers, and these include matrices with integral
elements. One can also show that triangular totally nonsingular
matrices exist over finite fields of exponential size. The question
whether they exist over finite fields of polynomial size (or at least
subexponential size) is open. Since totally nonsingular matrices
(i.e., matrices whose \emph{all} square submatrices are nonsingular) do exist
over fields of linear size, we conjecture that there exist triangular
totally nonsingular matrices over fields of polynomial size.

In this way we may be reducing the problem of constructing tree codes
to a more difficult problem. But since the concept of triangular
totally nonsingular matrices is very natural, the problem of
constructing such matrices over small fields is of independent
interest. We also hope that due to this connection we will be able
draw attention of the linear algebra community to this important open
problem in coding theory.

Here is a brief overview of the paper. In Section~1 we define linear codes
and prove some basic facts about them. Some facts in this section are
well-known, or well-known in some form. In particular, the existence
of asymptotically good linear tree codes was first proved by Schulman.
In Section~2 we observe that one can concatenate a tree code with a
constant size alphabet and input length $\log n$ with a tree code with
an alphabet of polynomial size and input length $n$ in order to obtain
a tree code with a constant size alphabet and input length $O(n\log
n)$. Since the ``short'' tree code can be found by brute force search
in polynomial time, it suffices to construct in polynomial time an
asymptotically good tree code with an alphabet of polynomial size in
order to get a polynomial time construction of asymptotically good
tree codes. This is also a well-known fact and is included for the sake of
completeness. 

In the main part of the paper we focus on linear tree codes of rate
$1/2$. In Section~4 we give a characterization of parity check
matrices of linear tree codes with a given minimum distance. In
Section~5 we introduce MDS linear tree codes.  We show that an MDS
linear tree code of rate $1/2$ is determined by a triangular totally
nonsingular matrix.  Since the minimum distance
of rate $1/2$ MDS tree codes is greater than $1/2$, in order to solve
the construction problem, it suffices to construct triangular totally
nonsingular matrices over fields of polynomial size.  We discuss some
approaches to the problem of constructing such matrices in Section~6.
In the last section we show a connection between MDS linear tree codes
and the Birkhoff interpolation problem.

\subsection*{Acknowledgment}
The author would like to thank Miroslav Fiedler, Leonard Schulman, Madhu
Sudan and an anonymous referee for their remarks and suggestions.

\section{Basic concepts and facts}\label{sect2}

We will assume that the reader is familiar with the basic concepts and
results from the theory of block codes. (The reader can find missing
definitions, e.g., in~\cite{macwilliams-sloane}.)

A \emph{tree code} of input length $n$ with finite alphabets $\Pi$ and
$\Sigma$ is a mapping $c:\Pi^n\to\Sigma^n$ of the form
\[
c(x_1\dts x_n) = (c_1(x_1),c_2(x_1,x_2),\dots,c_n(x_1\dts x_n))
\]
where $c_i:\Pi^i\to\Sigma$ and 
\bel{e-2.1}
(x_1\dts x_i)\ \mapsto\ (c_1(x_1)\dts c_i(x_1\dts x_i))
\ee
is a one-to-one mapping for every $i=1\dts n$. Hence $c$ induces an
isomorphism of the tree of the input words onto the tree of output
words, the code words of $c$. 

A natural way to define tree codes is to
define them as mappings of infinite sequences to infinite sequences,
i.e., $c:\Pi^\omega\to\Sigma^\omega$. This would somewhat complicate
the relations to the concepts in linear algebra that we want to use,
so we prefer the definition with finite strings, although most of the
concepts and results presented here can easily be translated to the
infinite setting.

%The following is an important concept introduced in~\cite{schulman}. 
Let $c$ be a tree code of input length $n$. Let $C$ be the set of the 
code words, i.e., the range of the function $c$. Then 
the \emph{minimum relative distance} of the tree code $c$, denoted by
$\delta(c)$,  is the
minimum over all $0\leq k< l\leq n$, $u\in\Sigma^k$,
$v,v'\in\Sigma^{l-k}$, $w,w'\in\Sigma^{n-l}$,
$(u,v,w),(u,v',w')\in C$, $v_1\neq v'_1$ of the quantity
\[
\frac{\dist(v,v')}{l-k},
\]
where $\dist(x,y)$ denotes the Hamming distance and $v_1$ and $v'_1$
are the first elements of the strings $v$ and $v'$. 

The \emph{rate} of the tree code is
\[
\rho(c)=\frac{\log|\Pi|}{\log|\Sigma|}.
\]

\begin{definition}
A tree code $c$ is \emph{linear,} if $\Pi$ and $\Sigma$ are finitely
dimensional vector spaces over a finite field $F$ and $c$ is a linear
mapping.  
\end{definition}

It should be noted that convolutional codes are special instances of
linear tree codes, but they are not interesting for us, because their
minimum relative distance, as defined above, is very small.

In this paper we will focus on the codes where $\Pi$ is the field $F$
and $\Sigma=F^d$. In this case, the rate of a linear code is the inverse
of the dimension of $\Sigma$, i.e., $\rho(c)=1/d$.

As in linear block codes, the minimum relative distance is
characterized by the minimum weight of nonzero code words: the minimum
relative distance of a linear tree code $c$ is the
minimum over all $0\leq k< l\leq n$, 
$v\in\Sigma^{l-k}$, $w\in\Sigma^{n-l}$,
$(\bar{0}^k,v,w)\in C$, $v_1\neq \bar{0}$ of 
\[
\frac{\wt_\Sigma(v)}{l-k},
\]
where $\bar{0}^k$ is the zero vector in $\Sigma^k$ and $\wt_\Sigma$ denotes
the Hamming weight with respect to the alphabet $\Sigma$. Note that it
is also natural to consider the Hamming weight with respect to
$F$. So we define  $\tilde{\delta}(c)$ as the minimum of
\[
\frac{\wt_F(v)}{d(l-k)}
\]
and focus on this quantity in the rest of this paper.
Clearly $\tilde{\delta}(c)\leq\delta(c)$. 

\begin{theorem}\label{t1}
  Let $n\geq 1$, $q=|F|$, $r=q^d=|\Sigma|$ and $0<\delta<\frac{r-1}r$
  such that \bel{e1} \log_r2q+\Ent_r(\delta)\leq 1.  \ee Then there
  exists a linear code $c:F^n\to\Sigma^n$ with $\tilde{\delta}(c) >
  \delta$. Moreover, if $q,r$ and $\delta$ are fixed, then such codes
  can be constructed for every $n$ in time $2^{O(n)}$.
\end{theorem}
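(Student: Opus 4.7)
The plan is to prove existence by a Gilbert--Varshamov style probabilistic argument and then derandomize by a greedy row-by-row enumeration.

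A linear tree code $c : F^n \to \Sigma^n$ of rate $1/d$ is parameterized by a block lower-triangular matrix $M$ with blocks $M_{i,j} \in F^d$ (for $j \le i$), via $c_i(x_1,\dots,x_i) = \sum_{j \le i} M_{i,j}\, x_j$; the tree-code property is equivalent to $M_{i,i} \ne 0$ for every $i$. The first step is to sample each $M_{i,j}$ independently and uniformly at random from $F^d$. The distributional fact I need is that for any triple $(k,l,x)$ with $0 \le k < l \le n$ and $x \in F^{l-k}$ having $x_1 \ne 0$, the ``test vector''
\[
v(k,l,x) := \bigl(c_{k+1}(\bar 0^k,x),\dots,c_l(\bar 0^k,x)\bigr) \in \Sigma^{l-k}
\]
is uniformly distributed on $\Sigma^{l-k}$: the different components use disjoint rows of $M$, and within each component the term $M_{i,k+1}\, x_1$ (with $x_1 \ne 0$) alone already makes $c_i$ uniform on $F^d$. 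The Hamming-ball estimate over the alphabet of size $r$ then gives
\[
\Pr\bigl[\wt(v(k,l,x)) \le \delta(l-k)\bigr] \;\le\; r^{-(l-k)(1 - \Ent_r(\delta))}.
\]

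For each gap $m := l-k$ there are $(n+1-m)(q-1)q^{m-1}$ such triples; setting $\alpha := q\cdot r^{-(1-\Ent_r(\delta))}$, the hypothesis $\log_r(2q) + \Ent_r(\delta) \le 1$ is precisely the condition $\alpha \le 1/2$, which tames the resulting geometric sum.

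The main obstacle I expect is the spurious factor of $n$ coming from the count of $(k,l)$ pairs with a fixed gap $m$ in the naive union bound. I would absorb this factor by constructing the rows of $M$ greedily, one at a time: given rows $1,\dots,l-1$ chosen so that the induced partial code already satisfies $\tilde\delta > \delta$, the inductive hypothesis restricts any new bad event at level $l$ to the requirement that $\wt(c_l(x))$ be abnormally small for the offending $x$; applying the probabilistic estimate to just the single new row then shows that a valid choice of row $l$ exists, and the side condition $M_{l,l} \ne 0$ is imposed directly. This yields existence for every $n$ together with the advertised $2^{O(n)}$ construction: at each level $l$ one enumerates the $q^{dl} \le 2^{O(n)}$ candidate rows and verifies the distance constraint for each in $\mathrm{poly}(n)$ time.
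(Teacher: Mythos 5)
Your setup is fine through the distributional fact and the ball estimate, and you correctly identify that the naive union bound leaves a stray factor of $n$. The problem is the repair you propose. You claim that, with rows $1,\dots,l-1$ fixed and good, the only new bad events at level $l$ are those where the single new block $c_l(\bar 0^k,x)$ has abnormally small weight, and that "applying the probabilistic estimate to just the single new row" rules these out. The first part of that sentence is correct, but the second part does not follow. The estimate $\Pr[\wt(v(k,l,x))\le\delta(l-k)]\le r^{-(l-k)(1-\Ent_r(\delta))}$ requires all of rows $k+1,\dots,l$ to be fresh and random; if only row $l$ is random and the prefix $v(k,l-1,x')$ is a fixed vector known only to satisfy $\wt_F>\delta d(l-1-k)$, the conditional probability that $c_l(\bar 0^k,x)$ falls in the residual ball of radius $<\delta d$ is a \emph{constant} $\approx q^{-d(1-\Ent_q(\delta))}$, independent of the gap $m=l-k$. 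But the number of triples $(k,x)$ at level $l$ with gap $m$ is $(q-1)q^{m-1}$, which grows exponentially in $m$; summing over $m=1,\dots,l$ gives a union bound of order $q^{l}\cdot q^{-d(1-\Ent_q(\delta))}$, which exceeds $1$ as soon as $l$ is moderately large. In fact for $m>d$ the map $x\mapsto c_l(\bar 0^k,x)$ has a nontrivial kernel, so for \emph{every} choice of row $l$ there will be many $x$ with $c_l(\bar 0^k,x)=0$, and no amount of luck in choosing row $l$ prevents those; only a strengthened induction hypothesis on the \emph{old} rows could.

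The paper avoids this by doing the induction on the \emph{other end} of the matrix: it prepends a new input coordinate $x_0$ and sets $c'(x_0,x_1,\dots,x_n)=(x_0\bar 1^d,\,c(x_1,\dots,x_n)+x_0v)$ with $v\in\Sigma^n$ random. This is adding a new row at the top of the generator matrix rather than a new column at the end, and the difference is decisive: the bad event at prefix length $k$ involves the whole random vector $v|_{[dk]}\in F^{dk}$, and the proof bounds the probability that $v|_{[dk]}$ lies within distance $\delta dk$ of the $q^k$-element set $C|_{[dk]}$, giving $q^k r^{\Ent_r(\delta)k}/r^k\le 2^{-k}$. Because the randomness at scale $k$ is a $dk$-dimensional object and the event is that it lands near a set whose log-size grows more slowly than $dk$, the failure probability decays geometrically in $k$, and $\sum_k 2^{-k}<1$ closes the induction without any factor of $n$. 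Your row-at-a-bottom approach cannot reproduce this decay because the freshly random part of any test vector $v(k,l,x)$ is always a single $F^d$-block. If you want a derandomization with linear randomness rather than the paper's iterated construction, look at the paper's second (cyclic/Toeplitz) proof, which collapses the $(n+1-m)$ factor by shift-invariance instead.
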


{\it Remarks}

1. In the theorem, $\Ent_r$ denotes the $r$-entropy function defined by
\[
\Ent_r(x)=x\log_r(r-1)-x\log_rx-(1-x)\log_r(1-x).
\]

2. Peczarski \cite{peczarski} proved that for every prime power $q$,
there exist codes with relative distance $1/2$ and rate
$1/(2+\lceil\log_q 4\rceil)$. This is better than the bound in the
theorem above for $\delta=1/2$.

3. Note that there exists $\delta>0$ such that for every $q>2$, there
exist tree codes with rate $1/2$ (i.e., $d=2$) and minimum relative
distance $\geq\delta$. We do not know if binary (i.e., $q=2$) tree codes
with rate $1/2$ can have asymptotically positive minimum
relative distance.  

\begin{proof} (Our proof is different from the one presented
  in~\cite{schulman}, but the basic idea is essentially the same.)  
%Let $\Sigma=F^d$; thus $r=q^d$. 
  Suppose $q$, $r$ and $\delta$ satisfy the inequality \eqrf{e1}
  above.  We will prove the existence by induction on $n$. For $n=1$,
  take the repetition code (i.e., $c(x)=(x\dts x)\in\Sigma$). Now
  suppose we have such a code $c$ for $n$ and want to construct a code
  $c':F^{n+1}\to\Sigma^{n+1}$. We take $v\in\Sigma^n$ random and put
\[
c'(x_0,x_1\dts x_n):=(x_0\bar{1}^d,c(x_1\dts x_n)+x_0v).
\]
Here we denote by $\bar{1}^d$ the vector in $F^d$ whose all
the $d$ coordinates are $1$.

The minimum weight condition is satisfied in the case when $x_0=0$ by
the induction assumption. Thus we only need to satisfy the condition
for $x_0\neq 0$ by a suitable choice of $v$. As above, let $C$ denote
the range of $c$. Let $C|_{[dk]}$ denote the projection of the vectors
of $C$ on the first $dk$ coordinates.
\begin{lemma}
  Let $1\leq k\leq n$. Let $v\in\Sigma^k=F^{dk}$ be a uniformly
  randomly chosen vector. Let $\langle C|_{[dk]}\cup\{v\}\rangle$ be
  the span of $C|_{[dk]}$ and the vector $v$. Then the probability
  that $\langle C|_{[dk]}\cup\{v\}\rangle$ contains a nonzero vector
  $u$ with weight $\wt_{F}(u)\leq\delta dk$ is at most $2^{-k}$.
\end{lemma}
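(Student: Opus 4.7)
I will use a Gilbert--Varshamov counting argument. The first step will be a reduction to a single coset: any element of $\langle C|_{[dk]}\cup\{v\}\rangle$ can be written as $u=z+\alpha v$ with $z\in C|_{[dk]}$ and $\alpha\in F$. For $\alpha\neq 0$, since $C|_{[dk]}$ is a subspace, the coset $\alpha v + C|_{[dk]}$ equals $\alpha(v+C|_{[dk]})$, and scaling by a nonzero element preserves $\wt_F$; hence all such cosets share the same minimum weight as $v+C|_{[dk]}$. So whether a random $v$ introduces a new low-weight vector into the span will reduce to whether $v+C|_{[dk]}$ itself contains one.

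Next I will set $U:=\{u\in F^{dk}:\wt_F(u)\leq\delta dk\}$. The bad event ``$(v+C|_{[dk]})\cap U\neq\emptyset$'' is equivalent to $v\in U+C|_{[dk]}$, so
\[
\Pr_v[\text{bad}]\;\leq\;\frac{|U+C|_{[dk]}|}{q^{dk}}\;\leq\;\frac{|U|\cdot|C|_{[dk]}|}{q^{dk}}\;=\;\frac{|U|\,q^k}{q^{dk}}\,,
\]
using $|C|_{[dk]}|=q^k$ (from the one-to-oneness property \eqref{e-2.1}, which makes the prefix map $F^k\to F^{dk}$ a linear bijection onto $C|_{[dk]}$). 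I then plan to invoke the standard Hamming-ball volume bound $|U|\leq q^{dk\,\Ent_q(\delta)}$ at the alphabet $F$ and apply the theorem's hypothesis $\log_r(2q)+\Ent_r(\delta)\leq 1$, rewritten at base $q$ via $r=q^d$, to reduce this to $2^{-k}$.

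The main obstacle, and only nontrivial piece, will be the entropy bookkeeping in the last step: the naive ball count naturally produces a condition in terms of $\Ent_q$, while the hypothesis is phrased with $\Ent_r$, so I will need the base-change identities $\log_r x=(\log_q x)/d$ and $\log_r(2q)=(1+\log_q 2)/d$ to reconcile them. A minor notational point is that, as worded, the lemma's conclusion formally also includes low-weight vectors already in $C|_{[dk]}$ (the $\alpha=0$ case); but these are independent of $v$, correspond in the surrounding induction only to ``delayed'' code words of $c$, and are handled separately by the hypothesis $\tilde\delta(c)>\delta$ applied to the appropriate middle segments. So the probabilistic content of the lemma is exactly the coset bound above.
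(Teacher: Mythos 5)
Your coset reduction and union bound are the same argument as the paper's (which phrases it as: a bad vector $u=aw+bv$ with $b\neq 0$ forces $\dist(v,-ab^{-1}w)\leq\delta dk$, then counts $|C|_{[dk]}|=q^k$ cosets times a ball). The $\alpha=0$ case is handled identically via the induction hypothesis. That part is fine, including the observation $|C|_{[dk]}|=q^k$.

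The gap is in the step you flag as ``entropy bookkeeping.'' You plan to bound the ball by $|U|\leq q^{dk\,\Ent_q(\delta)}$ and then ``rewrite at base $q$'' to recover hypothesis \eqref{e1}. But $\Ent_q(\delta)$ and $\Ent_r(\delta)$ are not related by a change of logarithm base: unlike $\log_r x=(\log_q x)/d$, there is no identity turning $\Ent_q$ into $\Ent_r$. Concretely, your chain gives
\[
\Pr[\text{bad}]\ \leq\ \frac{q^{dk\,\Ent_q(\delta)}\cdot q^k}{q^{dk}},
\]
and asking this to be $\leq 2^{-k}$ is equivalent to $\Ent_q(\delta)+\log_r(2q)\leq 1$. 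That is a \emph{different} and in general \emph{stronger} condition than \eqref{e1}, because typically $\Ent_q(\delta)>\Ent_r(\delta)$. For instance with $q=3$, $d=2$ (so $r=9$) and $\delta=0.05$: $1-\log_9 6\approx 0.185$, $\Ent_9(0.05)\approx 0.138\leq 0.185$ so \eqref{e1} holds, but $\Ent_3(0.05)\approx 0.212>0.185$, so your final inequality fails. The paper avoids this by invoking the entropy bound at radix $r$ directly, writing the ball size as $r^{\Ent_r(\delta)k}$, which makes the final inequality exactly equivalent to \eqref{e1}. To close your proof you would need to either adopt the paper's base-$r$ estimate of the ball, or show separately that $\Ent_q(\delta)\leq\Ent_r(\delta)$ under the theorem's hypotheses (which, as the example shows, is not true in general), or use a tighter non-entropic bound on $|U|$. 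As written, the reconciliation step will not go through.
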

Note that we are counting nonzero coordinates with respect to the
field $F$, not $\Sigma$.
\begin{proof} 
  Since by the induction assumption, there is no $u$ in $C|_{[dk]}$
  whose weight is $\leq\delta dk$, such a vector must be a linear
  combination $aw+bv$ where $w\in C|_{[dk]}$ and $b\neq 0$. So
  $\dist(v,-ab^{-1}w)\leq\delta dk$. The number of vectors whose
  distance from $C|_{[dk]}$ is $\leq\delta dk$ is estimated by $q^k$,
  the cardinality of $C|_{[dk]}$, times the size of a ball of radius
  $\delta dk$ in $F^{dk}$, which we can bound using the entropy
  function by $r^{\Ent_r(\delta)k}$. Thus the probability is at most
  $2^{-k}$, if
\[
q^kr^{\Ent_r(\delta)k}/r^k\leq 2^{-k}, 
\]
which is equivalent to \eqrf{e1}.
\end{proof}
Now we can finish the proof of the existence of the tree code. The
probability that $\tilde{\delta}(c')\leq\delta$ is at most the
probability that, for some $k$, $\langle C|_{[dk]}\cup\{v\}\rangle$
contains a nonzero vector $u$ with weight $\wt_{F}(u)\leq\delta dk$,
which is, according to the lemma, at most $\sum_{k=1}^n2^{-k}<1$.

\bigskip
We now estimate the number of operations that are needed to find such
a code. For every $k=1\dts n-1$, we have to search $r^k$ vectors and we
have to determine their distances from $q^k$ vectors of the code from
the previous round. Thus we have to consider $\sum_{k=1}^{n-1}r^{k}q^k<r^{n}q^n$
cases, each of which takes polynomial time. Thus the time is $2^{O(n)}$.
\end{proof}

The generator matrix of a tree code $c$ is defined in the same way as
for ordinary codes. Let $e_i^n$ denote the vectors of the standard
basis of $F^n$, i.e., vectors $$(1,0\dts 0),(0,1\dts 0)\dts(0,0\dts
1).$$  The \emph{generator matrix} of $c$ is the $n\times dn$ matrix
whose rows are vectors $c(e_i^n)$. It is a block upper triangular matrix
where the blocks are $1\times d$ submatrices and the blocks on the
main diagonal are nonzero vectors of $F^d$, because the mappings
\eqref{e-2.1} are one-to-one.

We define \emph{cyclic tree codes} as the linear tree codes
that satisfy
\[
v\in C\ \Rightarrow\ \bar{0}v|_{[d(n-1)]}\in C.
\]
This means that with every code word $v$, the code contains a word
that is obtained by adding $d$ zeros at the beginning of $v$ and
deleting the last $d$ coordinates. (In this particular case it would
be better to use infinite sequences instead of finite ones.)
We observe that if $c$ is a
cyclic tree code with the space $C$ of the code words, there
exists a cyclic tree code $c'$ with the same code words whose
generator matrix is \emph{block-Toeplitz}. Indeed, define the
generator matrix of $c'$ as the shifts of $c(e_1)$. Formally, put
\[
c'(e_i^n):=(\bar{0}\dts\bar{0},c(e_1^n)|_{[n-i+1]})
\]
for $i=1\dts n$.

Note that \emph{convolutional codes} (see, e.g., \cite{vanLint}) are,
essentially, a special case of cyclic linear tree codes. To this end
we must consider linear tree codes of the form
$c:\Pi^\omega\to\Sigma^\omega$. Then $c$ is a convolutional code if it
is generated by a vector $c(e_1^\omega)$ that has only a finite number
of nonzero entries. Obviously, such a code cannot be asymptotically
good. 

\medskip We will now give a slightly different proof of the existence
of good linear tree codes with the additional property of
cyclicity. Note that in this proof we need only a linear number of
random bits.

\begin{proof}
  Let $v_2\dts v_n\in\Sigma$ be chosen uniformly randomly and
  independent. Thus $(v_2\dts v_n)$ is a random vector from
  $\Sigma^{n-1}$. Let $T:=T(\bar{1}^d,v_2\dts v_n)$ be the upper block
  triangular Toeplitz matrix with the first row equal to
  $\bar{1}^d,v_2\dts v_n$. Since $T$ is Toeplitz, we only need to ensure
  the condition about the number of nonzero elements in nonzero vectors
  for vectors with the first block nonzero. Let $1\leq k\leq n$ and
  let $T(\bar{1}^d,v_2\dts v_k)$ be the submatrix of $T$ determined by
  the first $k$ rows and the first $dk$ columns. We will estimate the
  probability that for a linear combination of the rows in which the
  first row has nonzero coefficient is a vector $u$ with $\leq\delta
  dk$ nonzero coordinates. The vector $u$ can be expressed, using
  matrix multiplication, as \bel{e3} u=(a_1\dts a_k)T(\bar{1}^d,v_2\dts
  v_k), \ee where $a_1\neq 0$. The vector $u$ has the form
  $(a_1\bar{1}^d,u_2\dts u_k)$, $u_i\in\Sigma$. Let $a_1\neq 0,a_2\dts
  a_k$ be fixed and view $v_2\dts v_k$ as variables. Then \eqrf{e3}
  defines a linear mapping from $F^{d(k-1)}$ to itself. Due to the
  form of the matrix $T(\bar{1}^d,v_2\dts v_k)$ and the fact that
  $a_1\neq 0$, the mapping is onto, hence the vector $(u_2\dts u_k)$
  is uniformly distributed. Thus we can use the Chernoff bound, or the
  bound by the entropy function, to estimate the probability for a
  fixed linear combination. Then use the union bound to estimate the
  probability that such a linear combination exists. The rest is the same
  computation as in the first proof.
\end{proof}

Parity-check matrices for linear tree codes are defined in the same
way as for ordinary codes: their row vectors are the vectors of some
basis of the dual space to the space of the code words $C$. (Thus 
parity-check matrices uniquely determine $C$, but, in general, not the
function $c$.)  We now describe a \emph{normal form of the parity-check
  matrices of linear tree codes.}

\begin{proposition}[Normal Form]\label{p-2.3}
Every linear tree code  $c:F^n\to F^{dn}$ has a parity-check matrix of
the following form: 
\bi
\item lower block triangular matrix with blocks of dimensions
  $(d-1)\times d$ and with blocks on the main diagonal of full rank
  $d-1$, 
\ei 
Vice versa, any matrix satisfying the condition above is
  a parity-check matrix of a linear tree code $c:F^n\to F^{dn}$.
\end{proposition}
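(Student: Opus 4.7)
My plan is to prove the two directions by analyzing how the subspaces $C$ and $C^\perp$ interact with the block structure on coordinates $[dn]=\{1,\dots,dn\}$. For each $i=0,1,\dots,n$, write $[di]=\{1,\dots,di\}$ and note that the one-to-one condition \eqref{e-2.1} exactly says that the projection $C|_{[di]}$ has dimension $i$ (since for a linear map this projection is onto the image of $(x_1,\dots,x_i)\mapsto(c_1(x_1),\dots,c_i(x_1,\dots,x_i))$). This is the one key structural fact; everything else is bookkeeping.

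For the forward direction, I will build the parity-check matrix by a filtration argument. Let $V_i\subseteq C^\perp$ be the subspace of vectors whose support is contained in $[di]$, identified with the orthogonal complement of $C|_{[di]}$ inside $F^{di}$ (extended by zeros). From $\dim C|_{[di]}=i$ I get $\dim V_i=di-i=i(d-1)$, so each step $V_{i-1}\subset V_i$ adds exactly $d-1$ dimensions. Pick any $d-1$ vectors $h_i^{(1)},\dots,h_i^{(d-1)}$ in $V_i$ whose images modulo $V_{i-1}$ form a basis, and stack them into row blocks. The resulting matrix is automatically lower block triangular with $(d-1)\times d$ blocks. To see that the diagonal block at position $i$ has rank $d-1$, observe that any nontrivial linear combination of the new rows whose coordinates $d(i-1)+1,\dots,di$ vanish would be supported in $[d(i-1)]$, hence lie in $V_{i-1}$, contradicting the choice of representatives.

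For the converse, suppose $H$ is an $n(d-1)\times dn$ matrix of the prescribed form. The block triangular structure with full-rank diagonal blocks immediately gives that the upper-left $i(d-1)\times di$ submatrix $H_i$ has full row rank $i(d-1)$, so its kernel inside $F^{di}$ has dimension $i$. Let $C=\ker H$; I will check that the natural projection satisfies $C|_{[di]}=\ker H_i$, so that $\dim C|_{[di]}=i$. The inclusion $\subseteq$ is immediate from the triangular shape. For $\supseteq$, given $u\in\ker H_i$, one extends $u$ to an element of $\ker H$ by determining the remaining blocks one at a time: at step $j>i$, the $(d-1)$ rows of the $j$-th row block impose $d-1$ linear conditions on the $j$-th coordinate block (the earlier blocks being fixed), and the diagonal block having full rank $d-1$ makes these conditions solvable.

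Once $\dim C|_{[di]}=i$ is established, I construct the map $c$ explicitly. Let $W_i=\{v\in C:v|_{[d(i-1)]}=0\}$; its dimension drops by exactly one at each step, so one may pick $b_i\in W_i\setminus W_{i+1}$, and $b_1,\dots,b_n$ form a basis of $C$. Defining $c(e_i^n):=b_i$ yields a linear map $F^n\to F^{dn}$ with image $C$, and the one-to-one condition \eqref{e-2.1} at level $i$ follows because if $\sum_{j\le i}x_jb_j$ projects to $0$ on $[di]$ and $k$ is the least index with $x_k\ne 0$, then the $k$-th coordinate block of this sum equals $x_k$ times the (nonzero) $k$-th block of $b_k$. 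The step I expect to be the main technical obstacle is verifying $C|_{[di]}=\ker H_i$ in the converse, as it is the place where the full-rank-of-diagonal-blocks assumption must be used in an essential, constructive way.
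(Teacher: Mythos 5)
Your proof is correct. It takes the same underlying fact as the paper's proof (the one-to-one condition at level $i$ is equivalent to $\dim C|_{[di]}=i$, equivalently $\dim V_i=i(d-1)$), but packages the forward direction differently: the paper starts from an arbitrary parity-check matrix and applies an inductive Gaussian elimination, working from the last $d$ columns backwards and invoking the induction hypothesis on the leading $(d-1)(n-1)\times d(n-1)$ submatrix; you instead build the matrix from scratch by choosing a basis of $C^\perp$ adapted to the increasing filtration $0=V_0\subset V_1\subset\cdots\subset V_n=C^\perp$, which makes the lower block-triangular shape and the full-rank diagonal blocks fall out of the definitions rather than of an elimination procedure. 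The two arguments are of course equivalent in content, but yours is somewhat more economical because it never needs to reason about transforming an existing matrix, and it isolates the single dimension count on which everything rests. For the converse your construction (a decreasing filtration $W_i$ of $C=\ker H$, with $b_i\in W_i\setminus W_{i+1}$) mirrors the paper's construction of $c(e_n),c(e_{n-1}),\ldots$ by successive orthogonal-complement-and-independence choices; again the filtration phrasing makes the linear independence and the "first nonzero block" check cleaner, and you are more careful than the paper in explicitly verifying $C|_{[di]}=\ker H_i$ before using it. No gaps.
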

\begin{proof}
Let $M$ be a parity-check matrix of a tree code. We will transform $M$
into the form described above using row operations, i.e., we will use
Gaussian elimination to rows. 

The matrix $M$ has dimensions $(d-1)n\times dn$ because its rows span
a vector space dual to $C$ and $C$ has dimension $n$.  The basic
property of the matrix is: 
\bi
\item[(*)] for every $1\leq k\leq n$ the last $dk$ columns of $M$ span
  a vector space of dimension $(d-1)k$.  \ei To prove (*), consider
  the matrix $M'$ consisting of the last $dk$ columns. Let $L$ be the
  row space of $M'$. The dual space $L^\bot$ is the space of all
  vectors $v\in F^{dk}$ such that $(\bar{0}^{d(n-k)},v)\in C$, because
  $M$ is a parity-check matrix of $C$. Its dimension is at least $k$,
  because it contains the $k$ linearly independent projections of
  vectors $c(e_i^n)$, $i=(n-k)+1\dts n$. It also is at most $k$,
  because every linear combination of generating vectors that contains
  some $c(e_i^n)$, $i\leq n-k$, with a nonzero coefficient has a
  nonzero coordinate outside of the last $dk$ positions. Thus, indeed,
  the dimension of $L$ is $(d-1)k$.

% This is
% because the rows of this submatrix span a space dual to the subspace
% of the vectors of $C$ that have zeros on the first $d(n-k)$
% coordinates. That space has the dimension precisely $k$ because there
% are $k$ vectors with this property in the generator matrix and the
% remaining vectors in the generator matrix restricted to the first
% $(n-k)d$ coordinates are linearly independent.

\medskip
We start the elimination process with the last $d$ columns. Since the
rank of this matrix is $d-1$, we can eliminate all rows of this
$d\times (d-1)n$ matrix except for $d-1$ ones that form a basis of the
row space. We permute the rows so that these $d-1$ rows are at the
bottom. Now consider the submatrix $M'$ with the first $(d-1)(n-1)$ rows
and the first $d(n-1)$ columns of the transformed parity-check matrix
and the submatrix $N'$ of the generator matrix $N$ of $C$ with the
first $n-1$ rows and the first $d(n-1)$ columns. The matrix $M'$ has full
rank, because $M$ has it, and the rows of $M'$ are orthogonal to
the rows of $N'$. Hence $M'$ is a parity-check matrix of the code
defined by $N'$. So we can assume as the induction hypothesis that $M'$
can be transformed into a normal form. Thus $M$ has been transformed
into a normal form.

\medskip Now we prove the opposite direction. Let $M$ be a matrix
satisfying the condition of the proposition (in fact, we will be using
the property (*) that follows from it). We will construct a
generator matrix of a tree code $c$ starting from the last row of the
matrix and going upwards. Let $v\in\Sigma$ be a nonzero vector that is
orthogonal to the row space of the submatrix of $M$ consisting of the
last $d$ columns. We define $c(e_n)$ to be $v$ preceded with $d(n-1)$
zeros. Suppose we already have $c(e_{n-k+1})\dts c(e_n)$. We take any
vector $u\in F^{d(k+1)}$ that is orthogonal to the row space of the
submatrix of $M$ consisting of the last $d(k+1)$ columns and is
independent of the vectors $c(e_{n-k+1})\dts c(e_n)$ restricted to the
last $d(k+1)$ coordinates.  The vector $u$ must have some nonzero on
the first $d$ coordinates, because $c(e_{n-k+1})\dts c(e_n)$ span the
space dual to the row space of $M$ restricted to the last $dk$
columns.  Then we define $c(e_{n-k-1})$ to be $u$ preceded with
$d(n-k-1)$ zeros.
\end{proof}

% One can also define \emph{dual codes}. Given a code $c:F^n\to F^{2n}$, we
% take the parity-check matrix of $c$ of the form $1^\circ$, take the
% columns in the reverse order and use this matrix as a
% generator matrix of the dual code $c^\bot$. In order to define dual
% codes in general, one has to use codes $c:\Pi^n\to\Sigma^n$ where
% $\Pi$ and $\Sigma$ are general finitely dimensional vectors spaces over a
% finite field $F$; we leave this to the reader.

\section{From a large alphabet to a small one}

% Except for the construction of \cite{moore-schulman}, which is based
% on an unproven conjecture, no other explicit construction of an infinite
% family of tree codes over a fixed alphabet $\Sigma$ which have minimum
% relative distance $\delta$ for some constant $\delta>0$ is known. ``Explicit''
% is a vague concept; one interpretation is a polynomial time
% construction of the generator matrix. We will show a reduction of the
% latter problem to a construction of an infinite family of linear
% tree codes with positive rate and relative distance over fields of
% polynomial size.  We will start with linear codes.

It is well-known that it suffices to construct an asymptotically good
tree code whose input and output alphabets have polynomial sizes in
order to construct an asymptotically good tree code with finite
alphabets. The resulting construction is not quite explicit, because
it relies on the construction of small tree codes by brute-force
search, but it can produce the code in polynomial time. We present
this reduction for the sake of completeness and also in order to check
that it works for linear codes. For
simplicity, we will restrict ourselves to the binary input
alphabet and finite fields of characteristic 2.

\begin{proposition}
  Let $b,d$ and $\delta>0$ be constants, then there exist constants
  $d'$ and $\delta'>0$ such that the following is true. Suppose a
  generator (or parity-check) matrix of a linear tree code
  $c:\F_{2^\ell}^n\to \F_{2^\ell}^{dn}$ is given, where $\ell\leq
  b\log n$ and the minimum relative distance of $c$ is~$\delta$.  Then
  one can construct in polynomial time a generator matrix of a binary
  linear tree code $c':\F_2^{n'}\to\F_2^{d'n'}$ where $n'=\ell (n+1)$
  and the minimum relative distance of $c'$ is $\delta'$.
\end{proposition}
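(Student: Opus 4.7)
The plan is a standard concatenation: treat the given tree code $c$ (call it $c_{out}$) as an outer tree code over $\F_{2^\ell}$ and compose it with short binary codes produced by exhaustive search. First, apply Theorem~\ref{t1} to construct a binary linear tree code $c_{in}:\F_2^\ell\to\F_2^{d_{in}\ell}$ of constant rate and constant minimum relative distance $\delta_{in}>0$, and similarly a binary linear block code $c_{BC}:\F_2^\ell\to\F_2^{d_{BC}\ell}$ of constant rate and constant relative distance $\delta_{BC}>0$. Since $\ell\le b\log n$, the brute-force running time is $2^{O(\ell)}=n^{O(b)}$, polynomial in $n$.

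Next, define $c':\F_2^{\ell(n+1)}\to\F_2^{d'\ell(n+1)}$ with $d'=d_{in}+d\cdot d_{BC}$ as follows. Split the input into $n+1$ consecutive blocks $B_1,\dots,B_{n+1}$ of $\ell$ bits; identify $B_1,\dots,B_n$ with the outer input symbols $x_1,\dots,x_n\in\F_{2^\ell}$ and let $B_{n+1}$ be a trailing padding block whose purpose is to flush the final outer output $c_{out,n}$. At time $t=(i-1)\ell+r$ within block $B_i$, the $d'$-bit output of $c'$ consists of (i) the $r$-th output symbol of $c_{in}$ applied to the first $r$ bits of $B_i$, and (ii) the $r$-th chunk (of $d\cdot d_{BC}$ bits) of $c_{BC}(c_{out,i-1}(x_1,\dots,x_{i-1}))$, where $c_{BC}$ is applied componentwise to the $d$ elements of $\F_{2^\ell}$, and $c_{out,0}:=0$. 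Linearity, causality, and prefix-injectivity are inherited from the corresponding properties of $c_{out}$, $c_{in}$, and $c_{BC}$.

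For the distance, suppose two inputs $y,y'$ first differ at time $t$ in block $B_i$, with the bit difference occurring at position $r_0\in[1,\ell]$ within $B_i$. Within the remainder of $B_i$, $c_{in}$ applied to the differing partial blocks guarantees that at least a $\delta_{in}$-fraction of the subsequent output positions differ in their inner-code portion, which handles any window confined to $B_i$. For $i\le n$ we have $x_i\ne x_i'$, so by the tree-code distance of $c_{out}$ the symbols $c_{out,j}$ differ from $c_{out,j}'$ for a $\delta$-fraction of outer positions $j\ge i$, and each such differing $\F_{2^\ell}$-component, once passed through $c_{BC}$, contributes $\ge\delta_{BC}\ell d_{BC}$ differing bits emitted during the corresponding later block. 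In the remaining case $i=n+1$, only the inner code contributes, but the window is confined to $B_{n+1}$ and the $\delta_{in}$ bound suffices. The minimum of these contributions over all window sizes is a positive constant $\delta'$ depending only on $b,d,\delta$.

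The main technical obstacle is the intermediate-window regime: when $r_0$ is near $\ell$ the inner tree code contributes only $O(1)$ differing positions inside $B_i$, while a window extending only a small way into $B_{i+1}$ sees just a prefix of $c_{BC}(c_{out,i})$, whose guaranteed distance $\delta_{BC}\ell d_{BC}$ is a total and need not be present in any initial segment. I would resolve this either by replacing $c_{BC}$ with another brute-forced linear tree code applied per $\F_{2^\ell}$-component, so that partial emissions carry tree-code distance, or equivalently by scheduling each outer symbol's emission to be interleaved across several later blocks. Either fix supplies a uniform constant lower bound on the relative distance across all window sizes, delivering the required $\delta'>0$ and completing the proof.
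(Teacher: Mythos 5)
Your high-level plan is the same as the paper's (concatenate the given outer tree code with brute-forced binary codes of length $O(\log n)$), and you correctly identify the crux: in a na\"ive per-block concatenation, an interval that begins near the end of an $\ell$-block $B_i$ and extends a short way into $B_{i+1}$ is covered neither by the inner tree code (which has only $O(1)$ positions left inside $B_i$ and restarts from scratch in $B_{i+1}$, where the inner inputs may agree) nor by the outer block code (a short prefix of $c_{BC}(c_{out,i})$ carries no distance guarantee). However, the two fixes you sketch do not actually close this gap. Replacing $c_{BC}$ by a tree code applied per $\F_{2^\ell}$-component does not help: a tree code's distance guarantee begins only at the first position where its \emph{inputs} differ, and $c_{out,i}$ and $c'_{out,i}$ may agree on a long initial segment of bits, so a short window into $B_{i+1}$ still sees identical output. ``Interleaving across several later blocks'' is at best vague and at worst counterproductive --- spreading $c_{BC}(c_{out,i})$ thinner makes a short window into $B_{i+1}$ see even fewer of its bits. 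In both variants the relative distance over windows of length $w$ with $1\ll w<\ell$ starting at the end of $B_i$ is still not bounded away from zero.

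The paper's resolution is a different and essential idea that your argument is missing: the inner code must \emph{not} restart at block boundaries. The paper takes a \emph{cyclic} binary tree code $a$ of input length $\ell$, so that its generator matrix can be taken block-Toeplitz, and ``stretches'' it over the whole output: the $i$-th input bit of $c'$ contributes the shifted pattern $a(e_1^\ell)$ starting at output position $d'(i-1)+1$, on (say) the odd output bits. Because of the Toeplitz structure, the odd bits of $c'(v)$ restricted to any window of input length at most $\ell$ that starts at the first nonzero input bit form a (prefix of a) codeword of $a$ with nonzero first symbol, so the tree-code distance of $a$ applies \emph{uniformly}, with no dependence on where the window sits relative to $\ell$-block boundaries. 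This removes the intermediate-window regime entirely, and the remaining (long) windows, which necessarily contain at least one full $\ell$-block, are handled by the outer code composed with an ordinary block code on the even bits. Without the cyclicity and the stretched-Toeplitz inner layer, your construction does not give a uniform constant lower bound on the relative distance, so as written the argument has a genuine gap.
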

\begin{proof}
The basic idea is very simple: replace the symbols of the long
code $c$ by bit-strings of a binary code $a$ of logarithmic length. The short
code $a$ can be found in polynomial time in $n$ by Theorem~\ref{t1}
because its length is logarithmic. However, in order to make this idea
work, one has to overcome some technical problems. Remember that we
have to ensure large weight on all intervals. If we simply replaced
the symbols by code-words of some code, we would not be able to ensure large
weight on intervals that are parts of two consecutive
strings corresponding to two consecutive symbols of the code $c$. 

Our solution is to use two mechanisms to ensure large weights---one for
short intervals and one for long intervals. We reserve odd bits for
the first mechanism and even bits for the second. For short intervals
it is convenient to use a cyclic tree code, because we can take any shift
of a fixed string. In other words, we stretch a short cyclic tree code $a$
over the full length $n'$. In this way we ensure large weights on
short intervals. Then we do not need to encode the symbols of the long
code by a words of a tree code; it suffices to use some block code $f$.

Here is the construction in more
detail. 

% There is just one small technical
% complication. If we encoded by the short code single symbols, the
% condition about the minimal distance would be violated on the
% short intervals at the end of these encodings. Therefore we will
% encode pairs of consecutive symbols so that the encoding of a pair
% $v_i,v_{i+i}$ overlaps with the encoding of the next pair
% $v_{i+1},v_{i+2}$. 

Let $c:\F_{2^\ell}^n\to \F_{2^\ell}^{dn}$ be given by its generator
matrix. Let $a:\F_2^{\ell}\to\F_2^{d''\ell}$ be a cyclic linear
tree code with minimum distance $\delta''>0$.
%\footnote{The use of cyclic codes is not essential, but it 
%  simplifies the construction.} 
By Theorem~\ref{t1} (more precisely, by its stronger version shown in
its second proof) we
can pick constants $d''$ and $\delta''>0$ (independent of $\ell$), and
construct the generator matrix of such a tree code $a$ in polynomial time
in~$n$. Further we need a good linear  
block code
$f:\F_2^{d\ell}\to\F_2^{d^*\ell}$. So $d^*$ and its minimum relative
distance $\epsilon>0$ are further constants. (Explicit polynomial time
constructions of such block codes are well-known,
see~\cite{macwilliams-sloane,vanLint}.) We may assume 
w.l.o.g. that $d''=d^*$. 

The generator matrix of $c'$ is defined as follows. We
set $d':=2d''$ (because we want to use odd bits for the code words of
$a$ and even bits for the code words of $f$).  We need to define
$c'(e_i^{n'})$ for $i=1\dts n'$. Let $i$ be given and let $i=\ell k
+j$ where $0\leq k\leq n$ and $1\leq j\leq\ell$.

\ben
\item On the odd bits of $c'(e_i^{n'})$, we put the string $a(e_1^\ell)$,
  where $e_1^\ell=1\bar{0}^{\ell-1}$, so that it starts at the
  $d'(i-1)+1$-st bit (the first bit on which $c'(e_i)$ should be
  nonzero). If $k=n$ and $j>1$, we truncate the string appropriately.
The rest of the odd bits are zeros. 
\item On the even bits, if $k<n$, we put $c(e_k^n)$ encoded by $f$ 
shifted by $d'\ell$. If $k=n$, we put zeros everywhere. In plain words, we put
  the beginning of $c(e_k^n)$ encoded by $f$ on the next $d'$-block
  after the block where the first nonzero bit occurs.
\een
We will now estimate the minimum weight of segments of the code words of
$c'$. Let $0\leq i<j\leq n'$ be given and suppose that $v$ is an input 
word in which the first nonzero element is on the coordinate
$i+1$. 

If $j\leq i+\ell$, then there are at least $\delta''d'(j-i)$ nonzero
elements among the odd elements in the interval in $c'(v)$
corresponding to the interval $(i,j]$ in $v$ because the vector
restricted to odd coordinates in this interval is a code word of the
tree code $a$.  Hence the relative weight is at
least $\delta''/2$, and if $j\leq i+2\ell$, the relative weight is at
least $\delta''/4$.

Now suppose that $j>i+2\ell$. Then there is at least one entire
$\ell$-block between $i$ and~$j$. Suppose there are $k$ such
$\ell$-blocks. They correspond to $k$ consecutive elements of a code
word of $c$ in which the first element is nonzero. Hence there are at
least $\delta k$ nonzero elements among them. Using the code $f$, they
are encoded in $c'(v)$ to a string with at least $\epsilon\delta d'\ell k$
nonzero elements. Since entire blocks cover at least $1/3$ of
the interval, this ensures positive relative minimum distance at least
$\epsilon\delta/3$ on even bits, hence $\epsilon\delta/6$ on all bits.

\end{proof}

It is possible that an explicit construction of tree codes is found
where the fields have polynomial size, but their characteristic
increases with $n$. E.g., the fields could be prime fields with the
prime $p$ larger than $n$. Then the above construction cannot be used
to construct a linear tree code over a constant size field, but it is
not difficult to modify it to produce a nonlinear tree code.

% But when talking about polynomial time constructions of general
% tree codes, we have to be more specific about what this means. We
% will say that \emph{a family of tree codes is constructible in
%   polynomial time} if there is a polynomial time algorithm that for
% every tree code  $c:\Pi^n\to\Sigma^n$ in the family and every given
% input word $x\in\Pi^n$ computes $c(x)$. We assume that an encoding of
% the alphabets $\Pi$ and $\Sigma$ by binary strings is given.

% \begin{proposition}\label{prop-3.2}
%   Let $b,d$ and $\delta>0$ be constants, then there exist constants
%   $d'$ and $\delta'>0$ such that the following is true. Suppose a
%   family of polynomial time constructible tree codes $\cal C$ is given
%   such that for every code $c:\Pi^n\to\Sigma^n$ in the family,
%   $q=|\Pi|\leq n^b$, $\Sigma=\Pi^d$ and the minimum relative distance of
%   $c$ is $\delta$. Then there exists a family $\cal C'$ of polynomial
%   time constructible tree codes such that for every tree code
%   $c:\Pi^n\to\Sigma^n$, $|\Pi|\leq n^b$ in $\cal C$, there is a
%   tree code $c':\{0,1\}^{n'}\to\{0,1\}^{d'n'}$ where $n'=\lfloor\log_2
%   q\rfloor (n+1)$ and the minimum relative distance of $c'$ is
%   $\delta'$.
% \end{proposition}
% \prfs
% Choose a one-to-one mapping from $\{0,1\}^{\log_2q}$ into
% $\Sigma$. Then proceed in the same way as in the proof above with the
% only difference that now we have to define every code word instead of
% the generators .
% \qed

\section{A characterization of the minimum distance}

We will characterize the minimum distance the tree codes defined by
parity-check matrices in normal forms. For the sake of simplicity, we will
assume that the rate of the tree codes is $1/2$ (i.e., $c:F^n\to F^{2n}$).  We
will use the following standard notation. Given a matrix $M$ and
indices of rows $i_1<\dots <i_\ell$ and columns $j_1<\dots <j_k$,
\[
M[i_1\dts i_\ell|j_1\dts j_k]
\]
denotes the submatrix of $M$ determined by these rows and columns.

\begin{proposition}\label{l-4.1}
Let $M$ be an $n\times 2n$ parity-check matrix of a linear tree code $c$
in a normal form. Then $\tilde{\delta}(c)$ is the
least $\delta>0$ such that there are $0\leq k<\ell\leq n$ and $t$
indices 
$2k<j_1<\dots <j_t\leq 2\ell$, $j_1\leq 2k+2$ such that 
\ben
%\item $t\leq \delta(2k-2\ell)$,
\item $t\leq 2\delta(\ell-k)$, and
\item in $M[k+1,k+2\dts \ell\ |\ j_1\dts j_t]$ the first column is a
  linear combination of the other columns.
\een
\end{proposition}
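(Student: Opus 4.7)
The plan is to translate the combinatorial definition of $\tilde{\delta}(c)$ into the column-dependence condition by analyzing codewords of the form $u = (\bar{0}^k, v, w) \in C$ with $v_1 \neq \bar{0}$. Writing $u \in C$ as $Mu = 0$ and using that $M$ is lower block triangular with $1\times 2$ blocks, the first $k$ rows have their support inside the first $2k$ (zero) coordinates of $u$ and so are automatically satisfied; rows $k+1, \ldots, \ell$ give equations in the $v$-coordinates only (columns $2k+1, \ldots, 2\ell$); and rows $\ell+1, \ldots, n$ couple $v$ and $w$.

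The crux of the argument is that the last group of equations can \emph{always} be solved for $w$. The submatrix $M[\ell+1, \ldots, n \mid 2\ell+1, \ldots, 2n]$ is itself lower block triangular with nonzero $1\times 2$ diagonal blocks, so a short triangular-solve argument (for any dependence $\sum c_i R_i = 0$, look at the leading column of the diagonal block of the largest index $i$ with $c_i \neq 0$) shows that its $n-\ell$ rows are linearly independent, hence that it is a surjection $F^{2(n-\ell)} \to F^{n-\ell}$. Consequently, the existence of a codeword $(\bar{0}^k, v, w) \in C$ with the prescribed $v$ is equivalent to $v$ lying in the kernel of $M[k+1, \ldots, \ell \mid 2k+1, \ldots, 2\ell]$.

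Having made this reduction, I let $j_1 < \cdots < j_t$ be the nonzero positions of $v$ in absolute coordinates. Then $M[k+1, \ldots, \ell \mid 2k+1, \ldots, 2\ell]\,v = 0$ with $v$ nonzero on exactly these positions is precisely the statement that the columns $j_1, \ldots, j_t$ of $M[k+1, \ldots, \ell \mid \cdot]$ admit a linear dependence with a nonzero coefficient on column $j_1$, equivalently, that the first column of $M[k+1, \ldots, \ell \mid j_1, \ldots, j_t]$ is a linear combination of the others. The constraint $v_1 \neq \bar{0}$, where $v_1 \in \Sigma = F^2$, becomes $j_1 \in \{2k+1, 2k+2\}$, i.e., $j_1 \leq 2k+2$; and $\wt_F(v)/(2(\ell-k)) = t/(2(\ell-k)) \leq \delta$ is exactly $t \leq 2\delta(\ell-k)$. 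Taking the minimum over all valid configurations yields the claimed characterization.

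The only non-formal step is the extendability assertion in the middle paragraph, which crucially uses the normal-form hypothesis; without nonzero diagonal blocks the tail $w$ could impose hidden constraints back on $v$, and the equivalence would break. Everything else is unpacking the definition of $\tilde{\delta}(c)$ through the block structure of $M$.
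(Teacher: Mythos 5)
Your proof is correct and follows essentially the same route as the paper's: both arguments rest on the block‑triangular structure (rows $1,\dots,k$ are vacuously satisfied, rows $k+1,\dots,\ell$ constrain only $v$) together with the full‑row‑rank of the tail submatrix $M[\ell+1\dts n\mid 2\ell+1\dts 2n]$, which lets one freely extend any admissible $v$ to a full codeword. The only difference is expository — you phrase it as a single equivalence (codeword exists iff $v$ is in the kernel of $M[k+1\dts\ell\mid 2k+1\dts 2\ell]$), whereas the paper proves the two inequalities separately.
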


\begin{proof}
  Let $v$ be a nonzero code word of the code for which the minimum
  distance is attained. Let $j_1$ be the first coordinate of $v$ that
  is nonzero and let $2k<j_1\leq 2k+2$, $k<\ell\leq n$ and
  $j_1\dts j_t$ be all the nonzero coordinates of $v$ between
  $2k+1$ and $2\ell$ such that
\[
\tilde{\delta}(c)=\frac t{2(\ell-k)}.
\]
Since $M$ is a parity-check matrix of the code, the sum of
columns of $M$ with weights $v_t$ must be a zero vector. Note the
following two facts. First, the columns $2k+1,2k+2\dts 2n$ have
zeros on the rows $1\dts k$. Second, the columns $2\ell+1,2\ell+2\dts 2n$
have zeros on the rows $1\dts \ell$. From this, we get condition 2. Hence
$\tilde{\delta}(c)$ is at least the minimum $\delta$ that satisfies
the conditions of the lemma.

To show that it is at most $\delta$, suppose that $2k<j_1<\dots
<j_t\leq 2\ell$, $j_1\leq 2k+2$ are such that the two conditions are
satisfied. Let $\alpha_1,\alpha_2\dts \alpha_t$, $\alpha_1\neq 0$ be
the weights of a linear combination that makes the zero vector from
the columns of $M[k+1,k+2\dts \ell\,|\,j_1\dts j_t]$.
%  $j_1\dts j_t$
% restricted to the rows $\ell+1,i+2\dts k$.

We will show that there is a
code word $v$ that has zeros before the coordinate $j_1$, it has
nonzero 
on it, and all nonzeros between $2k+1$ and $2\ell$ are on coordinates
$j_1\dts j_t$. 
%It is clear what are the coordinates of $v$ up to$2\ell$. 
For $i=1\dts 2\ell$, we define $v_i=\alpha_{j_p}$ if $i=j_p$ for some
$1\leq p\leq t$, and $v_i=0$ otherwise.
This guarantees that the vector $Mv^\top$ has zeros on all
coordinates $1\dts \ell$, no matter how we define $v$ on the remaining
coordinates $2\ell+1\dts 2n$. Now we observe that the matrix  
$M[\ell+1\dts n\, |\, 2\ell+1\dts 2n]$ has full
rank, so a suitable choice of the coordinates $2\ell+1\dts 2n$ will make
the product $Mv^\top$ zero vector. Hence
\[
\tilde{\delta}(c)\leq\frac t{2(\ell-k)}\leq \delta.
\]
\end{proof}

\section{MDS tree codes}\label{s-MDS}

In this section we define tree codes that correspond to MDS block
codes and prove two characterizations of them.  Again, for the sake of
simplicity, we define it only for rate $1/2$ codes. First we prove a
general upper bound on the relative distance of linear tree codes of
rate $1/2$ that corresponds to the Singleton bound for block
codes. (As in the theory of block codes, this bound holds true also
for nonlinear tree codes and similar bounds can be proven for other rates.)

\bpr\label{p-5.0} 
For every linear tree code $c:F^n\to F^{2n}$,
$\tilde{\delta}(c)\leq\frac{n+1}{2n}$.  
\epr 
\prf
Let $M=(m_{ij})_{i,j}$ be a parity check matrix in a normal form. If
$m_{11}=0$ or $m_{12}=0$ we can construct a code word whose second,
respectively, first coordinate is zero and the other one is nonzero. 
(Suppose, e.g., that $m_{11}=0$. Define $v_1=1$ and $v_2=0$.  Now we
can extend $(v_1,v_2)$ to a code word because the matrix $M[2\dts
n\,|\,3\dts 2n]$ has full rank; see Proposition~\ref{p-2.3}.)
Hence $\tilde{\delta}(c)\leq
\frac 12\leq \frac{n+1}{2n}$. 

So suppose that $m_{11}\neq 0$ and $m_{12}\neq
0$. Let $3\leq j_2\leq 4$ \dots $2n-1\leq j_n\leq 2n$ be indices of
columns such that $M_{t,j_t}\neq 0$. We know that such columns exist
by Proposition~\ref{p-2.3}. Then the first column of $M$ is a linear
combination of columns $2,j_2,j_3\dts j_n$. Hence there is a code word
whose first coordinate is nonzero and it has at most $n+1$ nonzero
coordinates. Thus $\tilde{\delta}(c)\leq \frac{n+1}{2n}$.
\qed

The tree codes that meet the bound of Proposition~\ref{p-5.0}
naturally correspond to MDS block codes and therefore we make the following
definition.

\begin{definition}
A linear tree code $c:F^n\to
F^{2n}$ will be called an \emph{MDS tree code} if  $\tilde{\delta}(c)=
\frac{n+1}{2n}$.
\end{definition}
By Proposition~\ref{p-5.0}, the condition $\tilde{\delta}(c)=
\frac{n+1}{2n}$ is equivalent to $\tilde{\delta}(c)> \frac 12$.

\begin{proposition}\label{p-5.1}
  Let $M$ be a parity-check matrix of a linear tree code $c:F^n\to
  F^{2n}$ and let $M$ be in a normal form. Then $c$ is an MDS tree code
  if and only if for every $n$-tuple $1\leq j_1<\dots<j_n\leq 2n$
  satisfying 
\bel{e-prop5.1}
j_1\leq 2, j_2\leq 4\dts j_n\leq 2n, 
\ee
the columns $j_1\dts j_n$ are linearly independent.
\end{proposition}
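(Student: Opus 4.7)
The plan is to prove the two directions separately, with the backward one the more subtle.

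\textbf{Forward direction.} By contraposition, suppose some $n$-tuple $j_1 < \cdots < j_n$ with $j_i \le 2i$ has linearly dependent columns, so $\sum_i \alpha_i M[:, j_i] = 0$ for some nonzero $\alpha$. Letting $p$ be the smallest index with $\alpha_p \ne 0$, the vector $v$ with $v_{j_i} = \alpha_i$ and zeros elsewhere is a nonzero codeword whose first nonzero coordinate is $j_p \le 2p$. Setting $k = \lceil j_p/2 \rceil - 1 \le p - 1$ and $\ell = n$ in Proposition~\ref{l-4.1}, the support of $v$ lies in $[2k+1, 2n]$ and has size at most $n - p + 1 \le n - k$, giving $\tilde{\delta}(c) \le (n - p + 1)/(2(n - k)) \le 1/2$, contradicting the MDS assumption.

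\textbf{Backward direction, by contrapositive.} Assume $\tilde{\delta}(c) \le 1/2$. Proposition~\ref{l-4.1} provides $0 \le k < \ell \le n$ and $2k < J_1 < \cdots < J_t \le 2\ell$ with $J_1 \le 2k+2$ and $t \le \ell - k$, together with coefficients $\alpha_1, \ldots, \alpha_t$ (with $\alpha_1 \ne 0$) such that $\sum_r \alpha_r M[:, J_r]$ vanishes on rows $k+1, \ldots, \ell$. I choose a witness with \emph{minimal} $\ell$. I then claim $J$ must satisfy the Catalan-like condition $|J \cap (2m, 2\ell]| \le \ell - m$ for every $m \in [k, \ell]$. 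For if it fails at some $m \in (k, \ell)$, the columns $J_r$ with $J_r > 2m$ vanish on rows $1, \ldots, m$ by the normal form, so restricting the dependency to those rows and to the columns $J \cap (2k, 2m]$ produces another valid witness with $\ell$-value $m < \ell$, contradicting minimality.

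Next, I extend the dependency to a full codeword. The linear combination $\sum_r \alpha_r M[:, J_r]$ is already zero on rows $1, \ldots, \ell$ (on rows $1, \ldots, k$ because each $J_r \ge 2k+1$ forces a zero there). Let $u \in F^{n-\ell}$ denote its residual on rows $\ell+1, \ldots, n$. Since the submatrix $M[\ell+1, \ldots, n \mid 2\ell+1, \ldots, 2n]$ is itself in normal form with nonzero $1 \times 2$ diagonal blocks, I can solve $M[\ell+1, \ldots, n \mid 2\ell+1, \ldots, 2n]\,\beta = -u$ triangularly block by block, taking $\beta$ with support in at most one position per block of $(2\ell, 2n]$. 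Setting $v_{J_r} = \alpha_r$, $v_{2\ell + s} = \beta_s$ (and zero elsewhere) yields a nonzero codeword with support $T \subseteq J \cup T''$, where $T'' \subseteq (2\ell, 2n]$ has at most one element per block.

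Finally I extend $T$ to an $n$-tuple $s_1 < \cdots < s_n$ with $s_i \le 2i$; this is possible iff $|T \cap (2i, 2n]| \le n - i$ for all $i$. I verify this in three ranges: for $i \ge \ell$ only $T''$ contributes and the one-per-block bound gives at most $n - i$; for $k \le i < \ell$ the Catalan-like bound on $J$ and $|T''| \le n - \ell$ together yield $(\ell - i) + (n - \ell) = n - i$; for $i < k$ the total $|T| \le t + (n - \ell) \le n - k$ is strictly less than $n - i$. The codeword $v$ is supported inside this $n$-tuple $S$, so the columns of $M$ indexed by $S$ are linearly dependent. The principal difficulty I expect is the minimality argument establishing the Catalan-like bound on $J$; once that is in place, the triangular extension and the three-range verification are routine bookkeeping.
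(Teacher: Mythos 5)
Your proof is correct, and in the harder direction it takes a genuinely different route from the paper's. The paper introduces an intermediate condition $(\xi)$ --- every ``staircase'' square submatrix $M[\ell+1,\dots,\ell+t\mid j_1,\dots,j_t]$ with $2\ell<j_1\le 2(\ell+1),\dots,j_t\le 2(\ell+t)$ is nonsingular --- deduces $(\xi)$ from the stated column condition by padding the $t$-tuple trivially to an $n$-tuple and reading off a block-lower-triangular structure, and then refutes a Proposition~\ref{l-4.1} witness by taking the \emph{longest prefix} with $j_r\le 2(\ell+r)$, which yields a singular staircase submatrix. You bypass $(\xi)$ entirely: you take a Proposition~\ref{l-4.1} witness with \emph{minimal} $\ell$, use that minimality to obtain the Catalan-type bound $|J\cap(2m,2\ell]|\le\ell-m$, promote the partial dependency to an actual codeword by a block-by-block triangular solve past row $\ell$ (one nonzero per block), and then pad the codeword's support to an $n$-tuple violating \eqref{e-prop5.1} via a Hall-type count in three ranges of $i$. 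Both arguments are sound; the paper's is shorter and stays entirely at the level of square submatrices, while yours is constructive and produces an explicit low-weight codeword together with an explicit linearly dependent $n$-tuple, at the cost of extra bookkeeping. The one step you assert without proof is the extension claim ``$T$ can be completed to an $n$-tuple $s_1<\dots<s_n$ with $s_i\le 2i$ iff $|T\cap(2i,2n]|\le n-i$ for all $i$''; this is a true and standard greedy/Hall fact (the paper sidesteps it because its padding is of the trivial form $2i$), but it deserves a sentence. Your easy direction is essentially the paper's argument, with a more careful choice of the block index $k=\lceil j_p/2\rceil-1$.
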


\begin{proof}

First we show that the condition in the proposition implies the
following formally stronger condition:
\bi
\item[($\xi$)] for every $0\leq\ell<\ell+t\leq n$ and $2\ell< j_1<\dots<j_t$,
where $j_1\leq 2(\ell+1)\dts j_t\leq 2(\ell+t)$ the matrix
$
M[\ell+1\dts \ell+t \I j_1\dts j_t]
$
is nonsingular.
\ei
Indeed, given $j_1<...<j_t$ satisfying the general condition, we can
add $\ell$ elements before $2\ell$ and $n-\ell-t$ elements after $2(\ell+t)$ so that
the resulting $n$-tuple satisfies the condition of the
proposition. Let $N$ be the matrix consisting of these $n$ columns of
$M$. The matrix $N$ has the following block structure
consisting of square matrices
\[
\begin{pmatrix}
T_1 & 0 & 0  \\
A & M^* & 0  \\
B & C & T_2
\end{pmatrix}
\]
where $M^*=M[\ell+1\dts \ell+t \I j_1\dts j_t]$. Since $N$ is
nonsingular, $M^*$ must also be nonsingular.

Now suppose $M$ satisfies condition ($\xi$). Arguing by contradiction,
suppose that $\tilde{\delta}(c)\leq \frac 12$. By
Proposition~\ref{l-4.1}, we have $0\leq \ell<k\leq n$ and $t$ indices
$2\ell<j_1<\dots <j_t\leq 2k$, $j_1\leq 2\ell+2$, $t\leq 2\tilde{\delta}(c)
(k-\ell)\leq k-\ell$ such
that in $M[\ell+1,\ell+2\dts k\ |\ j_1\dts j_t]$ the first column is a
linear combination of the other columns.  
%(We are using the fact that
%$\lfloor\frac 12(2k-j_1+1)\rfloor=k-\ell$.)  
Let $s$ be the maximal
element $1\leq s\leq t$ such that for all $1\leq r\leq s$, the
inequality $j_r\leq 2(\ell +r)$ is true.  If $s<t$, then $j_{s+1}>
2(\ell+s+1)$. Hence any column $j_r$, for $r>s$, has zeros in rows
$\ell+1\dts \ell+s$. Thus the fact that the first column of
$M[\ell+1,\ell+2\dts k\ |\ j_1\dts j_t]$ is a linear combination of
the others implies that the same holds true for $M[\ell+1,\ell+2\dts
\ell+s\ |\ j_1\dts j_s]$. But this is impossible, because this matrix
is nonsingular according to ($\xi$).

To prove the opposite implication, suppose that we have an $n$-tuple
$1\leq j_1<\dots<j_n\leq 2n$ satisfying $j_1\leq 2, j_2\leq 4\dts
j_n\leq 2n$ such that the columns $j_1\dts j_n$ are linearly
dependent. Suppose that for some $\ell$, the column $j_{\ell+1}$ is a
linear combination of columns $j_{\ell+2}\dts j_n$. Then in 
 $M[\ell+1,\ell+2\dts n\ |\ j_{\ell+1}\dts j_n]$ the first column is a
  linear combination of the other columns, which violates the conditions
  of Proposition~\ref{l-4.1}.
\end{proof}

If
$M=(m_{i,j})_{i,j}$ is a parity check matrix in a normal form of a
code of rate $1/2$, we have either $m_{i,2i-1}\neq 0$ or $m_{i,2i}\neq
0$ for every $1\leq i\leq n$. Since permuting columns $2i-1$ and $2i$
does not change the relative distance, we can always assume w.l.o.g. that
\ben
\item[($\eta$)] all entries $m_{i,2i}$, $i=1\dts n$, are nonzero.  
\een 

Let $M$ be in a normal form and suppose that it satisfies
($\eta$). Using row operations we can eliminate all nonzero entries in
even columns, except for $m_{i,2i}$. Then we can multiply the rows to
get $m_{i,2i}=1$. The resulting matrix consists of a lower triangular
matrix interleaved with the identity matrix $I_n$. We will
characterize these triangular matrices of MDS tree codes.

A matrix $M$ is called \emph{totally nonsingular} if every square
submatrix of $M$ is nonsingular. A triangular matrix of dimension
$n\geq 2$ cannot be totally nonsingular because in a totally
nonsingular matrix every element is nonzero. However, there is a
natural modification that does make sense for triangular matrices. 
\begin{definition}
  An $n\times n$ lower triangular matrix $L$ is called
  \emph{triangular totally nonsingular} if for every $1\leq s\leq n$
  and every $1\leq i_1<\dots<i_s\leq n$, $1\leq j_1<\dots j_s\leq n$
  such that $j_1\leq i_1\dts j_s\leq i_s$, the submatrix $L[i_1\dts
  i_s |\, j_1\dts j_s]$ is nonsingular.
\end{definition}
Roughly speaking, $L$ is triangular totally nonsingular if it is
triangular and every square submatrix of $L$ that can be nonsingular,
is nonsingular. Upper triangular totally nonsingular matrices are
defined by reversing the inequalities between the indices $i$ and $j$,
i.e., requiring $j_1\geq i_1\dts j_s\geq i_s$.

% Recall from Section~\ref{sect2} that in the special case of rate $1/2$
% the parity check matrix can be presented as a unit matrix interleaved
% with a lower triangular matrix, see the condition $4^\circ$. We will
% call the lower triangular matrix the \emph{characteristic matrix of
%   the code.}

\begin{theorem}\label{t-5.1}
  Suppose that a parity check matrix of linear tree code $c:F^n\to
  F^{2n}$ has the form of a lower triangular matrix $T$ interleaved with
  the identity matrix $I_n$. Then $c$ is an MDS tree code if and
  only if $T$ is triangular totally nonsingular.
\end{theorem}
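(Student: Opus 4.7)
The plan is to invoke Proposition~\ref{p-5.1} and translate its criterion---linear independence of every admissible $n$-tuple of columns of the interleaved parity-check matrix $M$---into a statement about square submatrices of $T$, by Laplace expansion along the identity-block columns. In the given form, the columns of $M$ are, in order, $T_1,e_1,T_2,e_2,\dots,T_n,e_n$, where $T_i$ is the $i$-th column of $T$ and $e_i$ is the $i$-th standard basis vector. Given an $n$-tuple $1\le j_1<\dots<j_n\le 2n$ satisfying \eqref{e-prop5.1}, I would partition the positions into those with $j_k=2i-1$ odd (which select columns of $T$) and those with $j_k=2m$ even (which select standard basis vectors). Writing the odd indices as $j_{k_1}=2i_1-1,\dots,j_{k_s}=2i_s-1$, setting $I=\{i_1<\dots<i_s\}$ and $R=\{1,\dots,n\}\setminus\{m_k : j_k \text{ even}\}$, expansion along the $n-s$ standard-basis columns gives
\[
\det M[\,\cdot\,\mid j_1,\dots,j_n]=\pm\det T[R\mid I].
\]

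The heart of the argument is the combinatorial equivalence: the admissibility condition $j_k\le 2k$ corresponds exactly to the dominance condition $i_p\le r_p$ (writing $R=\{r_1<\dots<r_s\}$) that appears in the definition of triangular total nonsingularity, and this correspondence is a bijection between admissible tuples and pairs $(R,I)$ of equal-size subsets of $\{1,\dots,n\}$ with $i_p\le r_p$. I would prove this by rewriting both conditions as prefix-count inequalities: the condition $j_k\le 2k$ for all $k$ is equivalent to $|\{j_1,\dots,j_n\}\cap\{1,\dots,2k\}|\ge k$, and splitting the left side into odd and even contributions reduces this to $|I\cap\{1,\dots,k\}|\ge|R\cap\{1,\dots,k\}|$ for every $k$, which is a standard restatement of $i_p\le r_p$. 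The inverse map that reconstructs a tuple from $(R,I)$ simply sorts the set $\{2i-1 : i\in I\}\cup\{2m : m\notin R\}$.

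Once these two steps are in place, the theorem drops out. By Proposition~\ref{p-5.1}, $c$ is MDS iff the columns $j_1,\dots,j_n$ are linearly independent for every admissible tuple; by Step~1 this means $\det T[R\mid I]\ne 0$; by Step~2 the pairs $(R,I)$ that arise are exactly those with $i_p\le r_p$; so $c$ is MDS iff every such submatrix of $T$ is nonsingular, which is the definition of triangular total nonsingularity.

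The main obstacle I anticipate is the bookkeeping of Step~2: one must verify that the reconstruction from $(R,I)$ yields distinct, correctly ordered column indices satisfying \eqref{e-prop5.1}, and that the prefix-count equivalence really matches the dominance condition in both directions. The determinant expansion in Step~1 is routine once the notation is pinned down.
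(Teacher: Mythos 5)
Your proposal is correct and follows essentially the same route as the paper: reduce the determinant of an admissible column selection of $M$ to a signed minor $\det T[R\mid I]$ by expanding along the identity-block columns, then show that the admissibility condition $j_k\le 2k$ from Proposition~\ref{p-5.1} translates, via the same prefix-count rewriting, to the dominance condition $i_p\le r_p$ defining triangular total nonsingularity. The only difference is notational (the paper writes the complement set as $\{i_1,\dots,i_p\}=[1,n]\setminus\{k_1,\dots,k_q\}$ rather than introducing $R$, and observes that the prefix inequality need only be checked at $s=i_1,\dots,i_p$), so no further comparison is needed.
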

Let us note that a similar fact for MDS codes is well-known
(namely, the statement with totally nonsingular matrices instead of
triangular totally nonsingular matrices).%
\footnote{See Ch.11, \S 4, Theorem~8
  in~\cite{macwilliams-sloane}.}

\begin{proof}
  Let $j_1<\dots <j_p$ be some columns of $T$ and $k_1<\dots<k_q$ some
  columns of $I_n$ where $p+q=n$. Consider the determinant of the
  matrix formed by these columns. Observe that each nonzero term in
  the formula for this determinant must choose elements with
  coordinates $(k_1,k_1)\dts (k_q,k_q)$ from $I_n$ because these are
  the only nonzero elements in these columns. This implies that the
  determinant is equal, up to the sign, to
\[
\det(T[i_1\dts i_p\I j_1\dts j_p]),
\]
where $\{i_1\dts i_p\}=[1,n]\setminus\{k_1\dts k_q\}$,
$i_1<\dots<i_p$. (Note that we are now indexing columns by numbers
from $1$ to $n$ in both matrices $T$ and $I_n$.)  Hence to prove the
theorem it suffices to show that the condition \eqref{e-prop5.1} of
Proposition~\ref{p-5.1} on the indices of columns that should be
independent is equivalent to the condition on the indices of rows and
columns of submatrices that should be nonsingular in a triangular
nonsingular matrix.

First we note that the condition \eqref{e-prop5.1} translates to the following
\bel{bb}
\mbox{for all }s\leq p+q,\quad
|\{j_1\dts j_p\}\cap[1,s]|+|\{k_1\dts k_q\}\cap[1,s]|\geq s.
\ee
Since
\[
|\{k_1\dts k_q\}\cap[1,s]|=s-|\{i_1\dts i_p\}\cap[1,s]|,
\]
condition \eqref{bb} is equivalent to 
\bel{e-aa}
\mbox{for all }s\leq p+q,\quad
|\{j_1\dts j_p\}\cap[1,s]|\geq|\{i_1\dts i_p\}\cap[1,s]|. 
\ee
This inequality is satisfied for all $s\leq p+q$ if and only if it is
satisfied for all $s=i_1\dts i_p$. But for $s=i_r$ the inequality
\eqref{e-aa} is equivalent to the simple condition that $j_r\leq
i_r$, which is the condition required in the definition of triangular
totally nonsingular matrices.
\end{proof}

We note that we get a similar characterization of \emph{generator} matrices
of MDS linear tree codes of rate $1/2$.

\begin{corollary}
  Suppose that a linear tree code $c:F^n\to F^{2n}$ has a parity check
  matrix satisfying condition ($\eta$). Then $c$ is an MDS tree code
  if and only if it has a generator 
  matrix $N$ whose form is an upper triangular totally
  nonsingular matrix $S$ interleaved with $-I_n$ (minus the identity
  matrix).
\end{corollary}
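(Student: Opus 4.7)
The plan is to reduce the corollary to Theorem~\ref{t-5.1} by exhibiting the generator matrix explicitly, under a suitable normalization, as the orthogonal complement of the interleaved parity-check form. First I would use assumption~($\eta$) together with the row operations and scalings described just before Theorem~\ref{t-5.1} to replace the given parity-check matrix by an equivalent one in the interleaved form, i.e., a matrix $M$ whose odd columns are the columns of a lower triangular matrix $T$ and whose even columns are the columns of $I_n$. By Theorem~\ref{t-5.1}, $c$ is MDS if and only if this $T$ is triangular totally nonsingular.

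Next I would write down a candidate generator matrix $N$ in the parallel interleaved form, with odd columns drawn from $-I_n$ and even columns drawn from some $n\times n$ upper triangular matrix $S$. A direct computation of the inner products gives
\[
(MN^\top)_{k,i}=\sum_{j=1}^n M_{k,2j-1}N_{i,2j-1}+\sum_{j=1}^n M_{k,2j}N_{i,2j}=-T_{k,i}+S_{i,k},
\]
so $MN^\top=0$ precisely when $S=T^\top$. With this choice, row $i$ of $N$ has its first nonzero entry equal to $-1$ in position $2i-1$, hence $N$ has full row rank $n$ and the block-upper-triangular shape required of the generator matrix of a tree code. Its rows therefore span the $n$-dimensional kernel of $M$, i.e., the space $C$ of code words, so $N$ is a genuine generator matrix of $c$. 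This handles the forward direction; for the backward direction, given any generator matrix of the displayed form one reads off $M=T\,\#\,I_n$ with $T=S^\top$, which by the above is a parity-check matrix of $c$ and plainly satisfies~($\eta$).

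Finally I would observe that $T$ is triangular totally nonsingular if and only if $S=T^\top$ is upper triangular totally nonsingular. This follows at once from the identity
\[
S[i_1\dts i_s\ |\ j_1\dts j_s]=T[j_1\dts j_s\ |\ i_1\dts i_s]^\top,
\]
together with the remark that transposition swaps the roles of rows and columns, so the condition $i_r\leq j_r$ from the upper definition for $S$ becomes exactly the condition that in the corresponding submatrix of $T$ the column indices are bounded by the row indices, which is the lower definition. Combining the three steps yields both directions. The step which could have been the main obstacle, namely verifying that the candidate $N$ really is orthogonal to $M$, turns out to be a one-line computation once the interleavings are made explicit; the rest is transpose bookkeeping and a single invocation of Theorem~\ref{t-5.1}.
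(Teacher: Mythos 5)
Your proof is correct, but it follows a genuinely different route from the paper's. The paper, after reducing to the interleaved parity-check form $M$ (odd columns $= T$, even columns $= I_n$), takes $N$ to be $(T^{-1})^\top$ interleaved with $-I_n$ in the same parallel positions (odd columns $=(T^{-1})^\top$, even columns $= -I_n$); orthogonality $MN^\top=0$ then reads $TT^{-1}-I=0$ and is automatic. The price is that the paper must argue that $T^{-1}$ is again triangular totally nonsingular, and for that it invokes Jacobi's identity on complementary minors. You instead take the ``dual'' interleaving (odd columns $=-I_n$, even columns $=S$), compute $(MN^\top)_{k,i}=-T_{k,i}+S_{i,k}$, and read off $S=T^\top$ with no inversion in sight; the translation ``$T$ lower triangular totally nonsingular $\iff S=T^\top$ upper triangular totally nonsingular'' is then pure index bookkeeping. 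This avoids Jacobi's identity entirely, which is a genuine simplification (and incidentally sidesteps a small slip in the paper's wording, which says ``$T^{-1}$ is upper triangular totally nonsingular'' when it means $(T^{-1})^\top$ is, or equivalently that $T^{-1}$ is lower triangular totally nonsingular). The cost of your choice is only cosmetic: your $S$ sits in the even positions rather than the odd ones, which is still consistent with the corollary's phrase ``$S$ interleaved with $-I_n$,'' though it is the opposite of the parallel layout the paper has in mind. You might add half a sentence in the backward direction making explicit that the constructed $M=T\,\#\,I_n$ has rank $n$ and is orthogonal to $N$, hence is a parity-check matrix of $c$, before invoking Theorem~\ref{t-5.1}; this is what ``by the above'' compresses.
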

\begin{proof}
  Let $T$ be a lower triangular totally nonsingular matrix. Let $T$
  interleaved with $I_n$ be a parity check matrix of an MDS tree code
  $c:F^n\to F^{2n}$. Then $N$ constructed from $(T^{-1})^T$ and $-I_n$
  is, clearly, a generator matrix that generates the code words of
  $c$. By Jacobi's equality (see, e.g., \cite{fallat-johnson}),
  $T^{-1}$ is upper triangular totally nonsingular.

The proof of the opposite direction is essentially the same and we
leave it to the reader.
\end{proof}

\section{Triangular totally nonsingular matrices}

By Proposition~\ref{prop-3.2} and Theorem~\ref{t-5.1}, the problem of
constructing an asymptotically good tree code reduces to the problem
of constructing a triangular totally nonsingular matrix over a field of
polynomial size. We are not able to construct such matrices and, in
fact, we are even not able to prove that they exist.

\begin{problem}
Do there exist triangular totally nonsingular matrices over fields
with polynomially many elements? If they do, construct them explicitly.
\end{problem}
According to Theorem~\ref{t-5.1}, the problem is equivalent to the
question whether there exist linear MDS tree codes over fields with
polynomially many elements. We believe that in order to prove that
such matrices (and such codes) exist, one has to define them
explicitly. In this section we will discuss some approaches to this
problem.

First we observe that triangular totally nonsingular matrices exist in
fields of every characteristic. A simple way of proving this fact is
to take a lower triangular matrix whose entries on and below the main
diagonal are algebraically independent over a field of a given
characteristic. Below is a slightly more explicit example.

\begin{lemma}
Let $x$ be an indeterminate and let $W_n(x)=(w_{ij})_{i,j=1}^n$ be the
lower triangular $n\times n$ matrix defined by
\[
w_{ij}=x^{(n-i+j-1)^2}
\]
for $i\geq j$, and $w_{ij}=0$ otherwise. Then for every $1\leq
i_1<\dots<i_s\leq n$, $1\leq j_1<\dots j_s\leq n$ such that $j_1\leq
i_1\dts j_s\leq i_s$, the determinant
\[
\det(W_n(x)[i_1\dts i_s |\, j_1\dts j_s])
\]
is a nonzero polynomial in every characteristic.
\end{lemma}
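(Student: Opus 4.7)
The plan is to expand $\det(W_n(x)[i_1,\ldots,i_s|j_1,\ldots,j_s])$ by the Leibniz formula and identify a unique monomial of smallest $x$-degree whose coefficient is $\pm 1$. Every term has the form
\[
\mathrm{sgn}(\sigma)\prod_{k=1}^s w_{i_k,j_{\sigma(k)}},
\]
which is zero unless $i_k\geq j_{\sigma(k)}$ for every $k$; when it is nonzero it equals $\mathrm{sgn}(\sigma)\,x^{E(\sigma)}$ with
\[
E(\sigma)=\sum_{k=1}^s\bigl(n-i_k+j_{\sigma(k)}-1\bigr)^2.
\]
The key claim will be that $E(\sigma)$ is minimized strictly at $\sigma=\mathrm{id}$, so that the coefficient of $x^{E(\mathrm{id})}$ in the determinant equals $+1$, which is nonzero in every characteristic.

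To establish strict minimality, I would put $a_k=n-i_k-1$ and $b_\ell=j_\ell$, so that
\[
E(\sigma)=\sum_k a_k^2+\sum_\ell b_\ell^2+2\sum_k a_k b_{\sigma(k)}.
\]
The first two sums are independent of $\sigma$, so minimizing $E(\sigma)$ amounts to minimizing $\sum_k a_k b_{\sigma(k)}$. Because $i_1<\cdots<i_s$ the sequence $(a_k)$ is strictly decreasing, and because $j_1<\cdots<j_s$ the sequence $(b_\ell)$ is strictly increasing. By the strict rearrangement inequality the sum $\sum_k a_k b_{\sigma(k)}$ is uniquely minimized when large $a$'s are paired with small $b$'s, i.e.\ at $\sigma=\mathrm{id}$. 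I would verify this by the standard one-swap argument: if $\sigma$ has an inversion $k<l$ with $\sigma(k)>\sigma(l)$, swapping $\sigma(k)$ and $\sigma(l)$ changes the sum by $(a_k-a_l)(b_{\sigma(l)}-b_{\sigma(k)})<0$, so $\sigma$ is not a minimizer.

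Finally, the candidate minimizer $\sigma=\mathrm{id}$ yields an actually nonzero term, because the hypothesis $j_k\leq i_k$ for all $k$ guarantees that each $w_{i_k,j_k}$ lies on or below the diagonal of $W_n(x)$ and is therefore nonzero. Combining these facts, the Leibniz sum contains exactly one term of minimal degree $E(\mathrm{id})$ with coefficient $\mathrm{sgn}(\mathrm{id})=+1$, and all other nonzero terms have strictly larger $x$-degree, so they cannot cancel it. Hence the determinant is a nonzero polynomial in $x$ over the prime field, and in particular nonzero in every characteristic.

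The only real obstacle is the strictness of the minimum: one must check that no $\sigma\neq\mathrm{id}$ can produce the same exponent $E(\mathrm{id})$, which is exactly what the strict version of the rearrangement inequality (with distinct $a_k$ and distinct $b_\ell$) gives. The choice of exponents $(n-i+j-1)^2$ is calibrated precisely so that a strictly convex function of $a_k+b_{\sigma(k)}$ is being summed, converting uniqueness of the rearrangement-optimal pairing into uniqueness of the minimal-degree term.
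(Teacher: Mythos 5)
Your proof is correct, and it takes a route that is genuinely different in flavor from the paper's, even though both ultimately exploit the same algebraic identity $2(i'-i)(j-j')$. The paper isolates the \emph{highest}-degree monomial: it argues by induction on $s$, introducing a notion of ``extremal position'' in the submatrix, showing the top monomial must pick up every extremal entry, and then recursing after deleting the corresponding rows and columns. You instead isolate the \emph{lowest}-degree monomial, observe that it must come from the diagonal permutation $\sigma=\mathrm{id}$, and prove uniqueness directly via the rearrangement inequality applied to $\sum_k a_k b_{\sigma(k)}$ with $a_k=n-i_k-1$ strictly decreasing and $b_\ell=j_\ell$ strictly increasing. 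Your approach has a concrete advantage: the optimizer is immediately the diagonal product $\prod_k w_{i_k,j_k}$, which the hypothesis $j_k\le i_k$ guarantees is a nonzero term with coefficient $+1$, so no separate verification of non-vanishing (nor any induction) is needed. The paper's approach has to work harder to exhibit its extremal monomial because the permutation maximizing $E(\sigma)$ could in principle select positions above the diagonal where $W_n(x)$ vanishes; the extremal-position machinery is precisely what avoids that pitfall. Both proofs are complete and correct; yours is the more economical.
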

\begin{proof}
  We will show that the monomial of the highest degree occurs exactly
  once in the formula defining the determinant. We will use
  induction on $s$. For $s=1$, it is trivial. Suppose that $s>1$ and let
  $1\leq i_1<\dots<i_s\leq n$, $1\leq j_1<\dots j_s\leq n$ such that
  $j_1\leq i_1\dts j_s\leq i_s$ be given. Denote by $M:=W_n(x)[i_1\dts
  i_s |\, j_1\dts j_s]$.

  First we need to prove an auxiliary fact.  We will call $(i,j)$ an
  \emph{extremal position} in the matrix $M$ if 
\ben
\item $i\in\{i_1\dts i_s\}$, $j\in\{j_1\dts j_s\}$, $w_{ij}\neq 0$
  and,
\item for every $(i',j')$, if $i'\in\{i_1\dts i_s\}$, $j'\in\{j_1\dts
  j_s\}$ $i'\leq i$, $j'\geq j$ and $(i,j)\neq (i',j')$, then
  $w_{i'j'}=0$.
\een
Since $M$ has nonzero elements, it must have at least one
extremal position.
We will show that the monomial with the highest degree
  must contain all $w_{ij}$ where $(i,j)$ is extremal.

Suppose that $(i,j)$ is an extremal position and
  some nonzero monomial $m$ does not contain  $w_{ij}$. Then $m$ must contain
some  elements $w_{ij'}$ and $w_{i'j}$ where $j'<j$ and
$i'>i$. Observe that
\[
w_{ij'}w_{i'j}=x^{(n-i+j'-1)^2+(n-i'+j-1)^2}\quad\mbox{ and }\quad
w_{ij}w_{i'j'}=x^{(n-i+j-1)^2+(n-i'+j'-1)^2}.
\]
The difference between the second and the first exponent is
\[
-2ij-2i'j'+2ij'+2i'j=2(i'-i)(j-j')>0.
\]
Hence we get a monomial of higher degree if we replace
$w_{ij'}w_{i'j}$ by $w_{ij}w_{i'j'}$. This establishes the fact.

Now we can finish the proof. If every row index   $i\in\{i_1\dts
i_s\}$ and every column index $j\in\{j_1\dts j_s\}$ occurs in an
extremal position, then $M$ is lower triangular and the determinant has
only one monomial, the product of all elements in extremal positions. 
Otherwise, delete from $M$ all rows and columns
whose indices occur in the extremal position. The remaining matrix is
nonempty and, by the induction assumption, has
a unique monomial $m'$ of the maximal degree. The unique monomial of
the highest degree of $M$ is obtained from $m'$ by multiplying it by
$w_{ij}$ for all extremal positions $(i,j)$.
\end{proof}

Given a prime $p$ and a number $n\geq 1$, we can take an irreducible
polynomial $f(x)$ over $\F_p$ of degree higher than the degrees of the
determinants of the
square submatrices of $W_n(x)$. Then the matrix is triangular totally
nonsingular over the field $\F_p[x]/(f(x))$. The size of this field is
exponential, because the degree of $f(x)$ is polynomial in $n$.

\medskip
While we do not know the answer to the problem above, 
constructions of totally nonsingular matrices over fields of
linear size are known. Here are some examples.

Let $F$ be an arbitrary field and let $a_1\dts a_m,b_1\dts b_n\in F$ be
such that $a_i\neq b_j$ for all $i,j$. The matrix $\left(\frac
  1{a_i-b_j}\right)_{i,j}$ is called a \emph{Cauchy matrix} (see,
e.g., \cite{macwilliams-sloane}, page 323). If all
$a_i$ and all $b_j$ are distinct elements, then the matrix is
nonsingular. Since every submatrix of a Cauchy matrix is a Cauchy
matrix, the condition also implies that the Cauchy matrix with
distinct elements $a_1\dts a_m,b_1\dts b_n$ is totally nonsingular.
Thus given a field with at least $2n$ elements, we are able to
construct a totally nonsingular matrix of the dimension $n$. A special
case of a Cauchy matrix is the \emph{Hilbert matrix} $\left(\frac
  1{i+j-1}\right)_{i,j}$. 

More generally, we call a matrix of the form  $\left(\frac
  {g_ih_j}{a_i-b_j}\right)_{i,j}$, $g_i,h_j\in F$ a \emph{Cauchy-like
matrix}. Such a matrix is totally nonsingular if and only if all the
elements $a_1\dts a_m,b_1\dts b_n$ are distinct and the elements
$g_1\dts g_m,h_1\dts h_n$ are nonzero. One special case is the
\emph{Singleton matrices} which are matrices of the form  $\left(\frac
  1{1-a^{i+j-2}}\right)_{i,j}$ where the order of the element $a$ is
larger than $2n-2$, see~\cite{singleton}.

One can also construct a totally nonsingular matrix from a
parity-check matrix of an MDS block code.  The standard construction
of MDS block codes is the Reed-Solomon codes whose parity-check and
generator matrices are Vandermonde matrices. We will explain the
construction of a totally nonsingular matrix from an MDS code on this
special case.  We denote by $V_m(x_1\dts x_n)$ the \emph{Vandermonde
  matrix} $\left(x^i_j\right)_{i=0\dts m-1;j=1\dts n}$. Consider the
Vandermonde matrix $V_m(a_1\dts a_m,b_1\dts b_n)$ where all the
elements $a_1\dts a_m,b_1\dts b_n$ are distinct. If we diagonalize the
first $m$ columns using row operations, then the submatrix consisting
of the last $n$ columns becomes a totally nonsingular matrix. (This
follows from the fact that any set of $m$ columns is independent using
the argument used in the proof of Theorem~\ref{t-5.1}.)  The
diagonalization can be represented as multiplying $V_m(a_1\dts
a_m,b_1\dts b_n)$ by $V_m(a_1\dts a_m)^{-1}$ from the left. So this
means that the matrix
\[
V_m(a_1\dts a_m)^{-1}V_m(b_1\dts b_n)
\]
is totally nonsingular. One can check that this matrix is also
Cauchy-like (see, e.g., \cite{fiedler}, page 159, or \cite{roth-seroussi}).

%note that the notation used there is different.)

The theory of \emph{totally positive matrices} is another source of totally
nonsingular matrices. A matrix over $\R$ is totally positive if all its square
submatrices have a positive determinant. An example of a totally
positive matrix is the \emph{Pascal matrix} 
% $P^{m,n}:=\left({i+j\choose j}\right)_{i=1\dts m;j=1\dts n}$. 
$P_{n}:=\left({i+j\choose j}\right)_{i,j=0}^n$.  The elements of
this matrix are integers, so we can consider it over any field. Hence
if $p$ is a sufficiently large prime, then $P_n$ is totally
nonsingular in $\F_p$, but we do not know if $p$ can be
subexponential. The same problem arises for other examples of totally
positive matrices with rational coefficients, and triangular totally
positive matrices as well---we do not know if we can
find a polynomially large prime for which these matrices are still
totally nonsingular. 

Another important example of a totally positive matrix is the
Vandermonde matrix $V_n(a_1\dts a_n)$ with $0<a_1<\dots<a_n$. In 
general, a real Vandermonde matrix is not always totally nonsingular.

\emph{Triangular totally positive matrices} are
defined in a similar way as triangular totally nonsingular matrices.
The most interesting fact for us is that total positivity is preserved
by $LU$ decompositions.  We
quote the following result of Cryer~\cite{cryer}, see
also~\cite{fallat-johnson}, Corollary~2.4.2.

\begin{theorem}\label{t-4.3}
  A matrix $M$ is totally positive if and only if it has an $LU$
  factorization in which the terms $L$ and $U$ are both triangular
  totally positive. 
\end{theorem}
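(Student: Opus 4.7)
The plan is to prove the two directions separately: the Cauchy--Binet formula handles the ``if'' direction cleanly, while the ``only if'' direction requires an inductive Schur-complement argument together with determinantal identities relating minors of the LU factors back to minors of $M$.

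For the ``if'' direction, suppose $M=LU$ with $L$ lower triangular totally positive and $U$ upper triangular totally positive. For any $s\times s$ submatrix $M[I|J]$ with $I=\{i_1<\cdots<i_s\}$ and $J=\{j_1<\cdots<j_s\}$, the Cauchy--Binet formula yields
\[
\det M[I|J] \;=\; \sum_{|K|=s}\det L[I|K]\cdot\det U[K|J].
\]
By triangularity, $\det L[I|K]=0$ unless $k_r\le i_r$ for every $r$, and $\det U[K|J]=0$ unless $k_r\le j_r$ for every $r$; whenever these inequalities hold the triangular total positivity of $L$ and $U$ makes the corresponding factor strictly positive. Choosing $K=\{1,2,\ldots,s\}$ always satisfies both conditions (since $i_r,j_r\ge r$), contributing a strictly positive term; all other terms are non-negative, hence $\det M[I|J]>0$.

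For the ``only if'' direction, assume $M$ is totally positive. Positivity of every leading principal minor gives a unique factorization $M=LU$ with $L$ unit lower triangular. I would first handle the ``column-aligned'' minors: applying Cauchy--Binet to $M[I\,|\,\{1,\ldots,s\}]$, only $K=\{1,\ldots,s\}$ survives in the sum (because $U$ upper triangular forces $\det U[K\,|\,\{1,\ldots,s\}]=0$ otherwise), giving
\[
\det L[I\,|\,\{1,\ldots,s\}] \;=\; \frac{\det M[I\,|\,\{1,\ldots,s\}]}{\det M[\{1,\ldots,s\}\,|\,\{1,\ldots,s\}]}\;>\;0,
\]
and symmetrically $\det U[\{1,\ldots,s\}\,|\,J]=\det M[\{1,\ldots,s\}\,|\,J]>0$. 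To extend to arbitrary admissible $(I,J)$, I would induct on $n$. Writing
\[
M\;=\;\begin{pmatrix}1&0\\ v/m_{11}&I_{n-1}\end{pmatrix}\begin{pmatrix}m_{11}&w^\top\\ 0&S\end{pmatrix},
\]
with $S=M'-vw^\top/m_{11}$ the Schur complement, Sylvester's identity expresses every minor of $S$ as a ratio of two minors of $M$, both positive, so $S$ is again totally positive. The induction hypothesis then supplies triangular totally positive factors of $S$, which together with the first column of $L$ and first row of $U$ assemble into the full factors of $M$.

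The main obstacle is checking that this recursive assembly genuinely delivers triangular total positivity for \emph{every} admissible minor of $L$ (and of $U$), not merely those lying in the lower-right $(n-1)\times(n-1)$ block or in the column-aligned case already handled. Minors of $L$ that straddle the first column require an additional determinantal identity of Sylvester--Desnanot--Jacobi type, expressing $\det L[I|J]$ as a positive ratio of minors of $M$; carrying out this combinatorial bookkeeping is the technical heart of Cryer's original proof.
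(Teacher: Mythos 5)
The paper does not actually prove Theorem~\ref{t-4.3}; it quotes the result from Cryer~\cite{cryer} (see also \cite{fallat-johnson}, Corollary~2.4.2). So there is no in-paper proof against which to compare. Evaluating your proposal on its own merits:

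Your ``if'' direction is complete and correct. Cauchy--Binet applied to $M=LU$, together with the observation that $\det L[I|K]=0$ unless $k_r\le i_r$ and $\det U[K|J]=0$ unless $k_r\le j_r$ (and that the surviving factors are strictly positive by triangular total positivity), and that $K=\{1,\ldots,s\}$ always contributes a positive term, does yield $\det M[I|J]>0$.

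Your ``only if'' direction, however, is genuinely incomplete, as you yourself flag. The Schur complement step is sound: the formula $\det S[I|J]=\det M[\{1\}\cup I\,|\,\{1\}\cup J]/m_{11}$ shows $S$ is totally positive, and $L[2{:}n\,|\,2{:}n]$, $U[2{:}n\,|\,2{:}n]$ are indeed the LU factors of $S$, so induction on $n$ gives positivity of minors of $L$ (resp.\ $U$) lying entirely in the lower-right block. The column-aligned minors $\det L[I\,|\,\{1,\ldots,s\}]=\det M[I\,|\,\{1,\ldots,s\}]/\det M[\{1,\ldots,s\}\,|\,\{1,\ldots,s\}]$ are also correctly handled, and the case $1\in I\cap J$ collapses by Laplace expansion into the inductive block. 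What remains unproven is the case $1\in J$, $1\notin I$, and $J\neq\{1,\ldots,|J|\}$: these minors of $L$ use the first column of $L$ together with a non-initial selection from $L'$, and neither the Cauchy--Binet reduction nor the induction covers them. A Laplace expansion along the first column produces cofactors $\det L[I\setminus\{i_r\}\,|\,J\setminus\{1\}]$ for $r>1$ that need not satisfy the triangular admissibility condition, so signed cancellation is not controlled. This is a real gap, not a routine verification; one standard way around it is to abandon the purely determinantal route and invoke the Loewner--Whitney bidiagonal factorization of invertible totally positive matrices, from which the triangular factors inherit total positivity via the ``if'' direction you already proved. As written, your argument establishes only a proper subset of the required minors of $L$ and $U$.
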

Recall that the terms in an $LU$ decomposition are unique up to
multiplication by diagonal nonsingular matrices. Hence if $M$ is
totally positive, then $L$ and $U$ are both triangular totally
nonsingular for any $LU$ factorization. % (The factors that are 
% totally positive are those whose all entries are nonnegative.)

This theorem suggests the possibility of constructing triangular totally
nonsingular matrices from totally nonsingular matrices using an $LU$
decomposition. Unfortunately, the statement corresponding to the
theorem above is not true for totally nonsingular matrices.

\begin{fact}
There exists a $4\times 4$ matrix over rational numbers which is
totally nonsingular, but its $L$-factor is not triangular totally
nonsingular. 
\end{fact}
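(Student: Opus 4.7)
The plan is to exhibit an explicit $4\times 4$ matrix $M$ over $\mathbb{Q}$ and verify its properties directly. The claim is purely existential and involves only finitely many polynomial conditions on sixteen entries, so a single well-chosen example suffices and no general theorem is required.

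First I would identify a promising submatrix of $L$ to target for singularity. Using the classical formula
\[
l_{ij} = \frac{\det M[1,\ldots,j{-}1,i\,|\,1,\ldots,j]}{\det M[1,\ldots,j\,|\,1,\ldots,j]},
\]
a direct computation shows that most admissible $2\times 2$ minors of $L$ --- for instance $L[2,3|1,2]$, $L[3,4|1,2]$, $L[2,4|1,2]$ --- vanish \emph{if and only if} a specific minor of $M$ vanishes, and are therefore forbidden by total nonsingularity of $M$. By contrast, the submatrix $L[3,4|1,3]$ has determinant $l_{31}l_{43}-l_{41}$, whose numerator (after clearing the common denominator $m_{11}\det M[1,2,3|1,2,3]$) equals
\[
m_{31}\det M[1,2,4|1,2,3]-m_{41}\det M[1,2,3|1,2,3],
\]
which is not itself a minor of $M$. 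So making this $L$-minor vanish imposes a single nontrivial polynomial constraint on $M$ that is not implied by any of the total-nonsingularity conditions.

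Next I would construct $M$ subject to this one equation. Concretely, fix the first three rows to a simple rational matrix whose own minors are easy to keep nonzero, then solve for the first three entries of row four so as to satisfy the displayed equation above, leaving $m_{44}=z$ as a free parameter. Each of the $69$ square minors of $M$ is a polynomial in $z$ of degree at most one, so total nonsingularity fails only on a finite rational set of values of $z$; any rational $z$ outside this set gives a valid counterexample. Existence of the LU factorization itself is automatic, because total nonsingularity of $M$ forces every leading principal minor to be nonzero.

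The main obstacle is not conceptual but combinatorial: one must verify that all $69$ minors of the final candidate $M$ are nonzero, which is unavoidable but mechanical. I expect the final example to take the form
\[
M=\begin{pmatrix}1&2&3&5\\2&3&5&7\\4&7&8&9\\4&1&2&1\end{pmatrix},\qquad
L=\begin{pmatrix}1&0&0&0\\2&1&0&0\\4&1&1&0\\4&7&1&1\end{pmatrix},
\]
for which every minor of $M$ is a small nonzero integer that can be checked by hand, while $\det L[3,4|1,3]=\det\bigl(\begin{smallmatrix}4&1\\4&1\end{smallmatrix}\bigr)=0$, as required.
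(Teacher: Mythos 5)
Your proposal is correct, and the exhibited matrix does work: Gaussian elimination on
\[
M=\begin{pmatrix}1&2&3&5\\2&3&5&7\\4&7&8&9\\4&1&2&1\end{pmatrix}
\]
does give the unit lower triangular factor
\[
L=\begin{pmatrix}1&0&0&0\\2&1&0&0\\4&1&1&0\\4&7&1&1\end{pmatrix},
\]
which has the admissible singular submatrix $L[3,4\,|\,1,3]=\bigl(\begin{smallmatrix}4&1\\4&1\end{smallmatrix}\bigr)$, while a (tedious but finite) check confirms that all $69$ square minors of $M$ are nonzero. Your route is genuinely different from the paper's. The paper never writes down a concrete matrix: it works backward through the elimination steps, exhibiting the shape that $L$ must have to be singular, and then argues by genericity that the free entries can be chosen to avoid the finitely many nontrivial polynomial conditions of total nonsingularity. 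Your argument is the inverse: you first identify which $L$-minor is not forced nonzero by total nonsingularity of $M$ (via the quotient-of-minors formula for $l_{ij}$, correctly observing that $\det L[3,4\,|\,1,3]$ clears to $m_{31}\det M[1,2,4\,|\,1,2,3]-m_{41}\det M[1,2,3\,|\,1,2,3]$, which is not itself a minor of $M$), then solve that single equation over the integers and verify total nonsingularity by brute force. The paper's approach makes it transparent that such counterexamples are generic; yours is more elementary, self-contained, and yields a reusable explicit witness, at the cost of the mechanical $69$-minor check that the paper's genericity argument avoids.
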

\begin{proof}
Suppose that we use Gaussian elimination to transform a totally nonsingular
$4\times 4$ matrix to its $L$-factor and we arrive at the
following situation:

\[
\begin{pmatrix}
1 & 0 & 0 & 0 \\
1 & 2 & 2 & * \\
2 & 3 & 5 & * \\
3 & 4 & 7 & *
\end{pmatrix}
\qquad\mapsto\qquad
\begin{pmatrix}
1 & 0 & 0 & 0 \\
1 & 2 & 0 & 0 \\
2 & 3 & 2 & * \\
3 & 4 & 3 & *
\end{pmatrix}
\qquad\mapsto\qquad
\begin{pmatrix}
1 & 0 & 0 & 0 \\
1 & 2 & 0 & 0 \\
2 & 3 & 2 & 0 \\
3 & 4 & 3 & *
\end{pmatrix}
\]
The entries denoted by $*$ will be determined later. The last matrix
is not triangular totally nonsingular, because it contains
$\begin{pmatrix}
  2 & 2 \\
  3 & 3
\end{pmatrix}$
as a submatrix. So we only need to show that the first matrix can be
obtained from a totally nonsingular matrix by elimination in the first
row. Equivalently, we need to show that there exists a totally
nonsingular matrix $M$ of the form
\[
\begin{pmatrix}
1 & a & b & c \\
1 & 2+a & 2+b & d \\
2 & 3+2a & 5+2b & e \\
3 & 4+3a & 7+3b & f
\end{pmatrix}
\]
Let $N$ be the $4\times 3$ submatrix made of the first 3 columns 
of the
first matrix in the chain above. We will use the fact that $N$ is
totally nonsingular, which is easy to verify. 

We first show that one can choose $a$ and $b$ so that the first three
columns of $M$ form a totally nonsingular matrix $M'$. Let $A$ be a submatrix
of $M'$. If $A$ contains the first column, then it is nonsingular,
because $N$ is. If $A$ is a $2\times 2$ submatrix in the second two
columns, we can transform it, using a row operation, into a matrix in
which $a$ and $b$ only appear in the first row and there are nonzero
elements in the second. Thus $A$ is nonsingular provided that a
certain nontrivial linear function in $a$ and $b$ does not
vanish. Hence there is a finite number of nontrivial linear equations such that
if we pick $a$ and $b$ so that none is satisfied, then $M'$ is totally
nonsingular. 

Let $a$ and $b$ be fixed so that $M'$ is totally nonsingular. Now
consider a submatrix $A$ of $M$ that contains the last column. If $A$
is singular, then a certain linear function in $c,d,e,f$ must
vanish. This function is nontrivial, because its coefficients are
subdeterminants of $M'$, possibly with negative signs, or it is just
one of the variables $c,d,e,f$. Thus, again,
there is a choice of $c,d,e$ and $f$ that makes all these functions
nonzero and hence all these matrices $A$ nonsingular.
\end{proof}

On the positive side, we can prove the following simple fact, which
is, however,
not sufficient for constructing good tree codes.

\begin{proposition}
Let $F$ be an arbitrary field and let $M$ be an $n\times n$ totally
nonsingular matrix over $F$. Let $M=LU$ be an $LU$-factorization. Then
for every $1\leq k\leq n$, and $1\leq j\leq i_1<\dots <i_k\leq n$, the
matrix
\[
L[i_1\dts i_k\, |\, j,j+1\dts j+k-1]
\]
is nonsingular.
\end{proposition}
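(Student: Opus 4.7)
The plan is to block-decompose the factorization $M=LU$ along the split of indices $\{1,\dots,j-1\}$ versus $\{j,\dots,n\}$. Writing $M_{11}:=M[1,\dots,j-1\mid 1,\dots,j-1]$ and using that $L$ is lower and $U$ upper triangular, one obtains
\[
L=\begin{pmatrix} L_{11} & 0\\ L_{21} & L_{22}\end{pmatrix},\qquad U=\begin{pmatrix} U_{11} & U_{12}\\ 0 & U_{22}\end{pmatrix},
\]
with $L_{22}$ lower and $U_{22}$ upper triangular. Since $M$ is totally nonsingular, $\det M_{11}\neq 0$, and the standard Schur-complement identity yields $M_{22}-M_{21}M_{11}^{-1}M_{12}=L_{22}U_{22}$, so $L_{22}U_{22}$ is itself an $LU$-factorization of the Schur complement $M/M_{11}$.

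Next I would reinterpret the target submatrix. Setting $a_r:=i_r-(j-1)$, the hypothesis $j\leq i_1<\dots<i_k$ forces $a_1<\dots<a_k$ and $a_r\geq r$, and
\[
L[i_1,\dots,i_k\mid j,\dots,j+k-1]=L_{22}[a_1,\dots,a_k\mid 1,\dots,k].
\]
Because $U_{22}$ is upper triangular, only its first $k$ rows contribute to the first $k$ columns of the product $L_{22}U_{22}$, so
\[
(M/M_{11})[a_1,\dots,a_k\mid 1,\dots,k]=L_{22}[a_1,\dots,a_k\mid 1,\dots,k]\cdot U_{22}[1,\dots,k\mid 1,\dots,k].
\]
Applying the Schur determinantal formula to the $(j-1+k)\times(j-1+k)$ submatrix of $M$ with rows $\{1,\dots,j-1,i_1,\dots,i_k\}$ and columns $\{1,\dots,j+k-1\}$ then gives
\[
\det M[1,\dots,j-1,i_1,\dots,i_k\mid 1,\dots,j+k-1]=\det M_{11}\cdot\det (M/M_{11})[a_1,\dots,a_k\mid 1,\dots,k].
\]

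To conclude, I would note that (i) both the determinant on the left and $\det M_{11}$ are nonzero by total nonsingularity of $M$, and (ii) $U_{22}[1,\dots,k\mid 1,\dots,k]$ is upper triangular with nonzero diagonal — its $s$-th diagonal entry equals the ratio of consecutive leading principal minors of $M$, which are all nonzero — hence is nonsingular. Combining these facts with the displayed identity and the product formula forces $\det L_{22}[a_1,\dots,a_k\mid 1,\dots,k]=\det L[i_1,\dots,i_k\mid j,\dots,j+k-1]\neq 0$. I expect no serious obstacle: the only mildly subtle step is invoking the Schur determinantal identity on a rectangular column window, but this is immediate from block-row-reducing the chosen $(j-1+k)\times(j-1+k)$ submatrix through the nonsingular pivot block $M_{11}$.
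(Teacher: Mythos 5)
Your proof is correct. The approach is genuinely parallel to, but formulated differently from, the paper's. The paper first handles the case $j=1$ by observing that $L = M U^{-1}$ with $U^{-1}$ upper triangular, so $L[i_1,\dots,i_k\mid 1,\dots,k] = M[i_1,\dots,i_k\mid 1,\dots,k]\,U^{-1}[1,\dots,k\mid 1,\dots,k]$ is nonsingular; for general $j$ it then applies this to the row set $\{1,\dots,j-1,i_1,\dots,i_k\}$ and factors the determinant of $L[1,\dots,j-1,i_1,\dots,i_k\mid 1,\dots,j+k-1]$ through its block-triangular structure (the top-right block $L[1,\dots,j-1\mid j,\dots,j+k-1]$ is zero). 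Your argument instead packages the same block-triangularity through the Schur-complement identities $M/M_{11}=L_{22}U_{22}$ and $\det M[1,\dots,j-1,i_1,\dots,i_k\mid 1,\dots,j+k-1]=\det M_{11}\cdot\det(M/M_{11})[a_1,\dots,a_k\mid 1,\dots,k]$, handling all $j$ uniformly rather than via a $j=1$ base case. Both yield the result with essentially equal effort; the paper's version is slightly more elementary (only column operations and a $2\times2$ block-triangular determinant), while yours makes the Schur-complement structure explicit, which is conceptually cleaner and lines up nicely with the Cryer/Fallat--Johnson viewpoint on $LU$-factorizations of totally positive matrices quoted earlier in the section. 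One cosmetic remark: the observation that the diagonal entries of $U_{22}$ are ratios of consecutive leading principal minors is more than you need — nonsingularity of $U$ (which follows from $\det M=\det L\cdot\det U\neq 0$) already gives nonzero diagonal.
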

\begin{proof}
An $L$ factor of $M$ can be obtained by Gaussian elimination using
column operations. Hence every matrix of the form 
$L[i_1\dts i_k\, |\, 1\dts k]$ is a matrix obtained from 
$M[i_1\dts i_k\, |\, 1\dts k]$ using column operations. Thus every
such matrix is nonsingular. 

Now consider a submatrix $L[i_1\dts i_k\, |\, j,j+1\dts j+k-1]$, where
$1\leq j\leq i_1<\dots i_k\leq n$. Extend this matrix to  
$L[1\dts j-1,i_1\dts i_k\, |\, 1,2\dts j+k-1]$, which is nonsingular
by the previous observation. Also the matrix 
$L[1\dts j-1\, |\, 1\dts j-1]$ is nonsingular. Since 
$L[1\dts j-1\, |\, j,j+1\dts j+k-1]$ is a zero matrix, this implies
that $L[i_1\dts i_k\, |\, j,j+1\dts j+k-1]$ is nonsingular.
\end{proof}

\section{Birkhoff interpolation}

As mentioned above, Reed-Solomon codes are the standard construction
of MDS block codes and they are based on Vandermonde matrices. To
prove their properties one uses Lagrange interpolation. A natural
question then is whether there are similar concepts connected with MDS
tree codes. In this section we will argue that the problem
corresponding to Lagrange interpolation is Birkhoff interpolation.

\emph{Birkhoff interpolation} is the following problem.
Given distinct complex
numbers $a_1\dts a_m$, integers $0\leq i_{1,0}<\dots <i_{1,j_1}$,
\dots,    
$0\leq i_{m,0}<\dots <i_{m,j_m}$, and arbitrary complex numbers $A_{10}\dts
A_{1j_1}\dts A_{m0}\dts A_{mj_m}$, find a polynomial $f(x)$ of degree
$m\leq n-1+j_1+\dots +j_m$ such that
\[
\begin{array}{ccc}
f^{(i_{1,0})}(a_1)=A_{10},&\dots&f^{(i_{1,j_1})}(a_1)=A_{1j_1},\\
\vdots& & \\
f^{(i_{m,0})}(a_m)=A_{m0},&\dots&f^{(i_{m,j_m})}(a_m)= A_{mj_m},
\end{array}
\]
where $f^{(i)}$ denotes the $i$th derivative of $f$.

The special case in which $i_{k,l}=l$ for all $k,l$, called
\emph{Hermite interpolation}, has always a unique solution. However in
general, special conditions for the numbers $i_{k,l}$ and $a_k$ must
be imposed if we want to have a solution for every choice of
$A_{10}\dts A_{1j_1}\dts A_{m0}\dts A_{mj_m}$. We are interested in
the special case of $m=2$, which was solved by P\'olya~\cite{polya}.

\begin{theorem}\label{t-interpol}
  Let $a,b\in\C$, let $0\leq j_1<\dots j_p$ and $0\leq
  k_1<\dots<k_q$ be integers and let $A_1\dts A_p,B_1\dts
  B_q\in\C$.  Suppose that $a\neq b$ and the integers satisfy the following
  condition (which we already used in the proof of Theorem~\ref{t-5.1})
\[
\mbox{for all }s\leq p+q,\quad
|\{j_1\dts j_p\}\cap[0,s]|+|\{k_1\dts k_q\}\cap[0,s]|\geq s+1.
\]
Then there exists a unique polynomial $f(x)$ of degree $n\leq
p+q-1$ such that $f^{(j_s)}(a)=A_s$ for all $s=1\dts p$ and 
 $f^{(k_t)}(b)=B_t$ for all $t=1\dts q$.

 % Furthermore, if {\rm (\dag)} is not satisfied, then for some values
 % of $A_s$ and $B_t$ the problem does not have any solution. - not
 % true!!! a solution exists if the condition is shifted; then it is not
 % unique 
\end{theorem}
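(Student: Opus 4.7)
The plan is to exploit the dimension count: polynomials of degree at most $p+q-1$ form a space of dimension $p+q$, and we impose exactly $p+q$ linear conditions, so existence for every choice of $A_s, B_t$ is equivalent to uniqueness for the homogeneous problem. It thus suffices to prove that if $f$ has degree at most $p+q-1$ and $f^{(j_s)}(a) = f^{(k_t)}(b) = 0$ for all $s$ and $t$, then $f \equiv 0$.

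After the substitution $x \mapsto x+a$ I may assume $a = 0$ and set $c := b-a \neq 0$. For $f(x) = \sum_{\ell=0}^{p+q-1} c_\ell x^\ell$ the conditions at $0$ read $c_{j_s}=0$, so $f$ is supported on $L := [0,p+q-1]\setminus J = \{\ell_1 < \cdots < \ell_q\}$, where $J = \{j_1,\dots,j_p\}$ (the Pólya hypothesis at $s = p+q-1$ forces both $J$ and $K := \{k_1,\dots,k_q\}$ to lie in $[0,p+q-1]$). Counting complements inside $[0,s]$ converts the Pólya condition into the majorization $k_t \leq \ell_t$ for every $t = 1,\dots,q$, exactly as in the passage from \eqref{bb} to \eqref{e-aa} in the proof of Theorem~\ref{t-5.1}.

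The remaining conditions $f^{(k_t)}(c) = 0$ form a $q \times q$ linear system in $c_{\ell_1}, \dots, c_{\ell_q}$ with matrix $N = ((\ell_r)_{k_t} \, c^{\ell_r - k_t})_{t,r=1}^q$, using the falling factorial $(\ell)_k = \ell(\ell-1)\cdots(\ell-k+1)$, read as $0$ when $\ell < k$. Pulling the factor $c^{-k_t}$ out of row $t$ and then $c^{\ell_r}$ out of column $r$, and using $(\ell_r)_{k_t} = k_t! \binom{\ell_r}{k_t}$, yields
$$\det N \;=\; c^{\sum_r \ell_r - \sum_t k_t}\, \prod_t k_t! \,\cdot\, \det \left( \binom{\ell_r}{k_t} \right)_{t,r}.$$
Since $c \neq 0$ and $k_t! \neq 0$, the problem reduces to showing that the Pascal minor $\det(\binom{\ell_r}{k_t})$ is nonzero whenever $k_t \leq \ell_t$ for every $t$.

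The main obstacle is this last binomial-minor statement, a classical positivity property of the infinite Pascal matrix $(\binom{i}{j})_{i,j \geq 0}$. I would establish it through the Lindström--Gessel--Viennot lemma, exhibiting the determinant as the count of families of non-crossing monotone lattice paths between suitably chosen sources and sinks; the majorization $k_t \leq \ell_t$ both rules out every non-identity permutation of sources-to-sinks and guarantees that at least one non-crossing family exists, so the determinant equals a strictly positive integer. Alternatively, one can induct on $q$ via the Pascal recurrence $\binom{\ell}{k} = \binom{\ell-1}{k-1} + \binom{\ell-1}{k}$, splitting the matrix into a sum of two whose rows can be rearranged to expose smaller instances of the same majorization-constrained problem. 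Either route yields $\det N \neq 0$, forcing $c_{\ell_1} = \cdots = c_{\ell_q} = 0$ and hence $f \equiv 0$, which completes the proof.
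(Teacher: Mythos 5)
Your proof is correct and reaches the same bottleneck as the paper by a somewhat different route. The paper normalizes $(a,b)=(1,0)$, passes to Hasse derivatives (dividing column $2j,2j+1$ by $j!$), observes that the resulting parity-check-style matrix interleaves $L_n=\bigl(\binom{i}{j}\bigr)$ with the identity, identifies $L_n$ as the $L$-factor of an $LU$ factorization of the totally positive Pascal matrix $P_n$, and then invokes Cryer's theorem (Theorem~\ref{t-4.3}) to conclude $L_n$ is triangular totally positive. You instead normalize $a=0$, use the dimension count to reduce to the homogeneous problem, note that the vanishing conditions at $0$ eliminate the coefficients indexed by $J$ so that $f$ is supported on $L=[0,p+q-1]\setminus J$, translate the P\'olya condition into the majorization $k_t\le \ell_t$, and then factor the resulting $q\times q$ determinant into $\prod_t k_t!\cdot c^{\sum\ell_r-\sum k_t}\cdot\det\bigl(\binom{\ell_r}{k_t}\bigr)$ --- which is precisely the same Pascal minor. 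Your factoring of $k_t!$ from the rows plays the role of the paper's passage to Hasse derivatives. The difference lies in the last step: the paper cites Cryer's structural theorem about totally positive matrices, while you propose the Lindstr\"om--Gessel--Viennot lemma (or Pascal-recurrence induction) to prove the minor is a strictly positive integer directly. Both are valid; the LGV argument is more self-contained and combinatorial, whereas the paper's route makes visible the connection to triangular total positivity that motivates the rest of Section~6. Neither approach improves on the exponential bound for the prime $r$ in the finite-field analogue, which is the real open issue the paper highlights.
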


This theorem is stated for the field of complex numbers, but it holds
true also for fields of characteristic $r>0$, in particular for prime
fields $\F_r$, if $r$ is sufficiently large. For those primes $r$ for
which it is true, one can construct an MDS tree code. Unfortunately
we only know that for primes exponentially big in $p+q$.

\medskip
To see the connection of MDS tree codes to Birkhoff interpolation,
consider the matrix of the linear equations that one needs to solve
in order to find the interpolating polynomial.
For $n\geq 1$, let  $M(x,y)$ be the $(n+1)\times 2(n+1)$ matrix with
entries $M_{i,2j}=(x^i)^{(j)}$ and $M_{i,2j+1}=(y^i)^{(j)}$ (the
derivatives of terms $x^i$ and $y^i$) where
$i,j=0\dts n$.%
\footnote{Here we are exceptionally numbering rows and columns
  starting with zero.} 
So our matrix is
\medskip
\[
M(x,y)=
\begin{pmatrix}
1 & 1 & 0 & 0 & 0 & 0 & \dots  &0 & 0\\
x & y & 1 & 1 & 0 & 0 & \dots  &0 & 0\\
x^2&y^2&2x &2y & 2 & 2 & \dots  &0 & 0\\
\hdotsfor{9}\\
x^n&y^n&nx^{n-1}&ny^{n-1}&n(n-1)x^{n-2}&n(n-1)y^{n-2}&\dots&n!&n!
\end{pmatrix}
\]
\bigskip
Let
$
f(x)=c_nx^n+\dots +\dots c_0
$
be a polynomial. Then 
\[
(c_0\dts c_n)M(x,y)=(f(x),f(y),f'(x),f'(y)\dts f^{(n)}(x),f^{(n)}(y)).
\]

Theorem~\ref{t-interpol} tells us for which submatrices of $M(a,b)$,
$a,b\in\C$, $a\neq b$, the interpolation problem has a solution, hence
which sets of columns of $M(a,b)$ are independent. Recall that in the
proof of Theorem~\ref{t-5.1} we observed that \eqref{bb} is
equivalent to \eqref{e-prop5.1} of that proposition. Hence we have:

\bpr
Let $F$ be a field, $a,b\in F$, $a\neq b$, and $n$ a positive
integer. Then P\'olya's interpolation theorem
(Theorem~\ref{t-interpol}) holds true for every $p$ and $q$ such that
$p+q\leq n$ if and only if $M(a,b)$ is a parity check matrix of an MDS
tree code.
\epr

In order to get an idea for which fields the theorem can be true, we
will sketch a proof of Theorem~\ref{t-interpol}. Since any pair of
distinct elements of $F$ can be mapped to any other pair, we can
w.l.o.g. assume that $a=1$ and $b=0$. Next we divide each
column $2j$ and $2j+1$ by $j!$.  (In other words, we are replacing
standard derivatives by Hasse derivatives.)

In the resulting matrix the even columns are the matrix
$L_n:=\left({i\choose j}\right)_{i,j}$, where the binomial
coefficients are defined to be zero for $j>i$, and the odd columns
form the identity matrix. One can easily check that $L_n$ is an
$L$-factor of an $LU$ factorization of the Pascal matrix $P_n$. (This
can be shown by applying the binomial formula to the equality
$(x+y)^{i+j}=(x+y)^i(x+y)^j$.) Since $P_n$ is totally positive, $L_n$
is triangular totally positive by Theorem~\ref{t-4.3}. This implies
that condition~\eqref{bb} suffices for the solubility of Birkhoff
interpolation. 

\bigskip
So the problem boils down to the question, for which primes $r$, the
matrix $L_n$ is triangular totally nonsingular over the field
$\F_r$. This seems to be a very difficult problem and we do not dare
to conjecture that $r$ may be of polynomial size. A more promising
approach is to study the cases of Birkhoff interpolation that are
solvable in fields of polynomial size and see if they suffice to
ensure a positive minimum distance of the corresponding tree codes.

% \medskip
% Using $M(a,b)$ (as defined above) as a parity check matrix looks like a
% natural way of defining an MDS tree code. Unfortunately, we are not
% able to show that it works over polynomially large fields. To prove
% that it does work, one would have to show that there is a polynomially
% large prime $p$ that does not divide the non-zero determinants of an $L$ factor
% of the Pascal matrix $P_n$. If true, this would probably require to solve a 
% difficult number-theoretical problem. It is more likely that there is
% a polynomially large prime that does not divide \emph{most of these}
% determinants. This might suffice to show that the tree code has an
% asymptotically positive relative minimum distance. Another possibility
% could be to find a generator matrix whose entries have polynomially large
% absolute values. If such a matrix is found one could apply the reduction
% (Proposition~\ref{prop-3.2}) without taking the vectors modulo a
% prime. Again, we do not know if such generator matrices exist for
% $M(a,b)$, or any MDS code obtained from triangular totally positive
% matrices. 

\end{document}